 \newtheorem{theorem}{Theorem}[section]
 \newtheorem{prop}{Proposition}[section]
 \newtheorem{lemma}{Lemma}[section]
 \newtheorem{assumption}{Assumption}[section]
 \newtheorem{problem}{Problem}[section]
 \newtheorem{remark}{Remark}[section]
 \newtheorem{definition}{Definition}[section]
 \newtheorem{example}{Example}[section]
 \newtheorem{condition}{Condition}
  \theoremstyle{plain}
  \newtheorem{theorem}{Theorem}[section]
  \newtheorem{prop}{Proposition}[section]
  \newtheorem{lemma}{Lemma}[section]
  \newtheorem{assumption}{Assumption}[section]
  \newtheorem{problem}{Problem}[section]
  \newtheorem{remark}{Remark}[section]
  \theoremstyle{definition}
  \newtheorem{definition}{Definition}[section]
  \newtheorem{example}{Example}[section]
  \newtheorem{condition}{Condition}
  \theoremstyle{plain}
  \newtheorem{theorem}{Theorem}[section]
  \newtheorem{lemma}{Lemma}[section]
  \newtheorem{assumption}{Assumption}[section]
  \theoremstyle{definition}
  \newtheorem{definition}{Definition}[section]
  \newtheorem{example}{Example}[section]
\crefname{assumption}{Assumption}{Assumptions}
\newcommand{\printfnsymbol}[1]{%
  \textsuperscript{\@fnsymbol{#1}}%
}
\DeclareMathOperator*{\argmin}{arg\,min}
\DeclarePairedDelimiter\abs{\lvert}{\rvert}%
\DeclarePairedDelimiter\norm{\lVert}{\rVert}%
\let\oldabs\abs
\def\abs{\@ifstar{\oldabs}{\oldabs*}}
\let\oldnorm\norm
\def\norm{\@ifstar{\oldnorm}{\oldnorm*}}
\newcommand{\ogamma}{\overline{\gamma}}
\newcommand{\oalpha}{\overline{\alpha}}
\newcommand{\ualpha}{\underline{\alpha}}
\newcommand{\olambda}{\overline{\lambda}}
\newcommand{\ulambda}{\underline{\lambda}}
\newcommand{\usigma}{\underline{\sigma}}
\newcommand\Sym[1]{\left[#1\right]_{\mathbb{S}}}
\newcommand{\pdv}[2]{\frac{\partial #1}{\partial #2}}
\newcommand{\ellone}{\texorpdfstring{$\mathcal{L}_1$}{L1}}
\newcommand{\rellone}{\texorpdfstring{$\mathcal{CL}_1$}{CL1}}
\newcommand{\gp}{\texorpdfstring{$\mathcal{GP}$ }{GP }}
\title{Contraction $\mathcal{L}_1$-Adaptive Control using Gaussian Processes}
\author{Aditya Gahlawat\thanks{These authors contributed equally for this work}}
\author{Arun Lakshmanan\printfnsymbol{1}}
\author{Lin Song}
\author{Andrew Patterson}
\author{Zhuohuan Wu}
\author{Naira Hovakimyan}
\affil{Mechanical Science and Engineering, University of Illinois at Urbana Champaign, \texttt{\{gahlawat,lakshma2,linsong2,appatte2,zw24,nhovakim\}@illinois.edu}}
\author{Evangelos Theodorou}
\affil{School of Aerospace Engineering, Georgia Institute of Technology, \texttt{evangelos.theodorou@gatech.edu}}
\date{}
\begin{document}
\maketitle

\begin{abstract}
We present $\mathcal{CL}_1$-$\mathcal{GP}$, a control framework that enables safe \textit{simultaneous} learning and control for systems subject to uncertainties.
The two main constituents are contraction theory-based $\mathcal{L}_1$ ($\mathcal{CL}_1$) control and Bayesian learning in the form of Gaussian process (GP) regression.
The $\mathcal{CL}_1$ controller ensures that control objectives are met   while providing safety certificates.
Furthermore, $\mathcal{CL}_1$-$\mathcal{GP}$ incorporates any available data into a GP model of uncertainties,  which improves performance and enables the motion planner to achieve optimality \textit{safely}. This way, the safe operation of the system is \textit{always} guaranteed, even during the learning transients. We provide a few illustrative examples for the safe learning and control of planar quadrotor systems in a variety of environments.
\end{abstract}


\section{Introduction}

A majority of planning algorithms based on model predictive control (MPC) and model-based reinforcement learning (MBRL) compute optimal control sequences using a nominal or learned system model. However, models have inaccuracies and the robot may behave sub-optimally. In the worst cases, the system will become unstable or collide with obstacles. These model inaccuracies have especially serious consequences for safety-critical systems \cite{knight2002safety}. Machine learning (ML) algorithms have been proven to be potent tools for learning complex and accurate models in robotics \cite{recht2019tour}, improving performance. However, the robot's safety during the learning transients is not always guaranteed. For instance, a robot may enter unsafe regions while collecting data because it does not take into account the model inaccuracies.

Control-theoretic approaches that offer safety certificates based on Lyapunov functions and robust control invariant sets are gaining popularity \cite{perkins2002lyapunov,berkenkamp2017safe,chow2018lyapunov} in the context of safe robot learning. Many recent safe-learning examples establish the notion of asymptotic stability with control-theoretic tools \cite[Chapter 3]{khalil2014nonlinear}.  Although critically important, asymptotic stability by itself is not sufficient for the safe operation of robots. Safety must be guaranteed during the learning process with transient bounds. Techniques like uncertainty propagation have been proposed to characterize transient performance using learned statistical models \cite{hewing2019cautious}. However, methods based on uncertainty propagation are often approximate, computationally expensive for planning, and do not provide apriori certificates of safety.

\subsection{Our Contributions}

We propose a learning-based control framework using robust adaptive control theory for nonlinear systems that ensures \textit{improvement of optimality and performance while simultaneously guaranteeing safety}  which we refer to as $\mathcal{CL}_1$-$\mathcal{GP}$ control. The safety guarantees are composed of apriori computable transient performance bounds and robustness margins. We rely on Bayesian learning in the form of GP regression to learn the state and time-dependent model uncertainties from noisy measurements. We use the predictive distribution provided by GP learning to compute high-probability error bounds for the estimated uncertainties \cite{lederer2019uniform}. These estimates are then incorporated within the $\mathcal{CL}_1$ robust adaptive control framework recently presented in \cite{lakshmanan2020safe}. Our $\mathcal{CL}_1$-$\mathcal{GP}$ control framework is planner-agnostic and is designed to work with any planner capable of generating desired state and control trajectories using the known (learned or nominal) model. This feature enables the framework to be used in conjunction with many popular planning algorithms such as differential dynamic programming \cite{tassa2012synthesis}, model predictive path integral control \cite{williams2018information}, and sampling-based planners \cite{lavalle2001randomized}, among many others \cite{cichella2017optimal,howell2019altro}.

A critical feature of the $\mathcal{CL}_1$-$\mathcal{GP}$ framework is that it enables MBRL algorithms to achieve optimality as learning progresses but the safety is guaranteed \textit{at all times} regardless of the quality of the learned model. We define safety using the performance bounds and the robustness margins associated with the controller. The performance bounds quantify how far the system trajectory may deviate from the desired trajectory based on the amount of unmodeled uncertainty. The robustness of the controlled system is a function of the available sensors, actuators, and computational hardware. The $\mathcal{CL}_1$ controller provides a sensible approach to balance the trade-off between performance and robustness requirements for safe navigation. However, this trade-off implies that for a specification of robustness margins there is limit on how tight the performance bounds can get. Our framework addresses this problem by using model learning to reduce the effect of the uncertainty which results in tighter performance bounds than would be possible with \cite{lakshmanan2020safe} alone. Moreover, the improved model knowledge and the tighter performance bounds are then incorporated into the planner used by the MBRL algorithm to generate more optimal but still \textit{safe} trajectories as shown in ~\Cref{fig:intro}.

\begin{figure}[t]
    \centering
    \subfloat[]{\label{fig:intro_wide}\includegraphics[width=0.30\columnwidth]{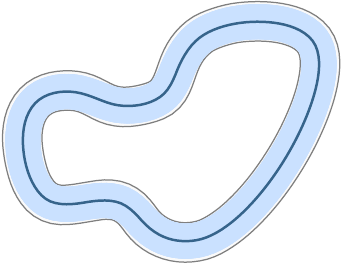}}
    \subfloat[]{\label{fig:intro_medium}\includegraphics[width=0.30\columnwidth]{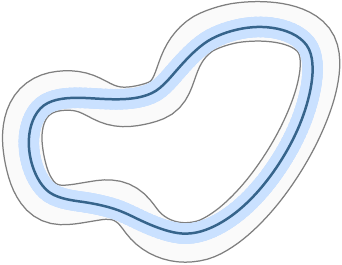}}
    \subfloat[]{\label{fig:intro_narrow}\includegraphics[width=0.30\columnwidth]{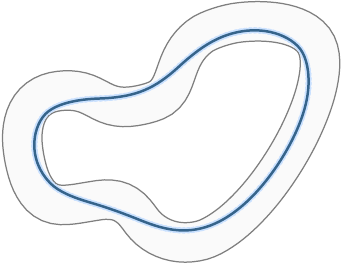}}
    \setlength{\belowcaptionskip}{-10pt}
    \caption{Consider a vehicle traversing a race track with some nominal model knowledge. Depending on the uncertainty and robustness requirements the $\mathcal{CL}_1$-$\mathcal{GP}$ framework will guarantee the performance bounds as denoted by the blue tube around the planned trajectory (a). As learning progresses, the performance bounds become tighter and the learned model can be incorporated into the planner to generate more optimal trajectories (b, c). }
    \label{fig:intro}
\end{figure}

\subsection{Related Work}

Robust MPC problem formulations consider safe planning and control under uncertainties that can be further divided into Min-max MPC~\cite{magni2001receding,raimondo2009min,wang2017adaptive}
or tube-based MPC~\cite{rakovic2016elastic,lopez2019dynamic}.  Min-max approaches plan considering the worst-case disturbance, which might render them overly conservative or even infeasible over long horizons. Tube-based MPC addresses this issue by using an ancillary controller to attenuate disturbances so that the robot stays inside of a tube around the planned trajectory. With the exception of~\cite{lopez2019dynamic}, such methods assume the existence of stabilizing ancillary controllers.  This assumption is limiting, since the design of such controllers cannot be assumed unless further assumptions (e.g. feedback linearizable, strict-feedback) on the dynamics are made.  Learning-based MPC (LBMPC) frameworks have been proposed to reduce conservatism. The LBMPC frameworks use measured data to improve models  and increase optimality. Safe LBMPC methods can be found in~\cite{aswani2013provably} and~\cite{wabersich2018linear} for nominal linear models. Frameworks that use nonlinear nominal models with LBMPC are found in~\cite{soloperto2018learning}  and~\cite{rosolia2019sample}. Another class of corrective methods that ensure safety is based on control barrier functions (CBFs)~\cite{ames2019control,ames2016control,xu2015robustness}.
The use of CBFs with parametric uncertainty was recently considered in~\cite{lopez2020robust}. Similar to how the presented framework is planner agnostic, FaSTrack~\cite{herbert2017fastrack,fridovich2018planning}
is a reactive method for safe planning and control with the aim of fast real-time trajectory generation using reachability analysis.

Gaussian processes are a commonly used class of statistical models~\cite{williams2006gaussian}. These models are popular in learning-based control, in part, because they provide predictive distributions, which can characterize modeling errors. The authors in~\cite{berkenkamp2017safe} use the regularity of the uncertainty and the sufficient statistics of the learned GP models to safely expand the region of attraction and improve control performance. Safety is guaranteed by the existence of Lyapunov functions ensuring asymptotic stability. Similarly, the authors in~\cite{lederer2019uniform} propose a new method to compute uniform error bounds of learned GP models for safe control. Probabilistic chance constraint methods, which use uncertainty propagation, have been shown to provide both asymptotic and transient bounds on robot performance~\cite{koller2018learning}. The implementations that rely on approximate uncertainty propagation offer excellent empirical performance without theoretical guarantees, shown in~\cite{hewing2019cautious,ostafew2016robust}.

The proposed method avoids uncertainty propagation completely when considering nonlinear dynamics. Instead, we rely on uniform error bounds for GP predictions to apriori guarantee tracking performance with respect to a desired trajectory. Moreover, we provide an \textit{explicit} design for the feedback controller with stability and performance guarantees. This controller is capable of incorporating the learned dynamics while ensuring safety with respect to the updated models. This incorporation is based on both contraction theory~\cite{manchester2017control} and the $\mathcal{L}_1$ adaptive control theory~\cite{hovakimyan2010L1}. Safe planning and control using $\mathcal{L}_1$ adaptive control theory can be found in~\cite{pereida2018adaptive,pravitra2020l1,lakshmanan2020safe}. These results fall under the category of safe feedback motion planning, where the $\mathcal{L}_1$ adaptive controller is the ancillary controller that guarantees tracking performance and robustness for the planner's commanded tasks. In~\cite{pereida2018adaptive},  $\mathcal{L}_1$ is used to reject the system nonlinearity and the planner generates trajectories for the nominal linear model. In~\cite{pravitra2020l1}, a model predictive path integral (MPPI) planner considers nonlinear dynamics while the $\mathcal{L}_1$ controller provides robustness and compensates for the uncertainties. The effectiveness is shown empirically, without guarantees. The work in~\cite{lakshmanan2020safe} provides certifiable performance bounds for nonlinear nominal dynamics using contraction theory~\cite{manchester2017control}. Contraction theory enables tracking control with nonlinear systems without relying on feedback linearizability or strict-feedback structure, and thus can be used with learned dynamics as presented in this paper. Related work using contraction theory based control laws can be found in~\cite{manchester2017control,singh2017robust,lopez2020adaptive}.


\section{Preliminaries and Problem Statement}\label{sec:prelim}

We consider the following affine in control dynamics given by
\begin{subequations}\label{eqn:system_dynamics}
\begin{align}
    \dot{x}(t) =  F(\xi(t),x(t),u(t)) &=   \bar{F}(x(t),u(t)) + B(x(t))h(\xi(t),x(t)) \\
    &= f(x(t)) + B(x(t))(u(t) + h(\xi(t),x(t))), \quad x(0) = x_0,
\end{align}
\end{subequations}
where $x(t) \in \mathbb{R}^n$ is the system state and $u(t) \in \mathbb{R}^m$ is the control input. The functions $f(x(t)) \in \mathbb{R}^n$ and $B(x(t)) \in \mathbb{R}^{n \times m}$ are known. The function $h(\xi(t),x(t)) \in \mathbb{R}^m$ represents the model uncertainty within the range of the input operator $B(x(t))$. Here $\xi(t) \in \mathbb{R}^l$ represents any known time-varying parameter, including $\xi(t) = t$, allowing  to consider both time and state dependent uncertainties.
Finally, it is evident from~\eqref{eqn:system_dynamics} that $F(\xi(t),x(t),u(t))$ represents the \textit{actual dynamics}, whereas $\bar{F}(x(t),u(t))$ represents the \textit{nominal dynamics}.
\begin{assumption}\label{assmp:models:functions}
The functions $f(x)$, $B(x)$, and $h(\xi,x)$ are continuous, bounded, and Lipschitz in $x$, uniformly in $\xi$, for all $\xi \in \mathbb{R}^l$, and for all $x \in D \subset \mathbb{R}^n$, where $D$ is a compact set which can be arbitrarily large. Moreover, $B(x)$ has full column rank for all $x \in D$.
\end{assumption}
\begin{assumption}\label{assmp:models:functions_derivatives}
The derivatives $\pdv{f}{x}(x)$, $\pdv{B}{x}(x)$, $\pdv{h}{x}(\xi,x)$, and $\pdv{h}{\xi}(\xi,x)$ are  bounded for all $\xi \in \mathbb{R}^l$ for all $x \in D \subset \mathbb{R}^n$, where $D \in \mathbb{R}^n$ is any compact set allowed to be arbitrarily large.
\end{assumption}

For the planning problem to be feasible with respect to the robot's dynamic capabilities, we provide the following definition.
\begin{definition}\label{def:planner}
Over a \textit{planning horizon} $[0,T_f]$, $0 < T_f \leq \infty$, we say that $(x_d(t),u_d(t))$ is a \textit{desired state-input pair} if $\dot{x}_d(t) = \bar{F}(x_d(t),u_d(t))$ and $x_d(t) \in \mathcal{X}$, for all $t \in [0,T_f]$, where $\mathcal{X} \subset \mathbb{R}^n$ is any compact convex set.
Given any $\rho>0$, we define the $\rho$-norm ball centered at $x_d(t)$ by
\begin{equation} \label{eqn:norm_balls}
\Omega(\rho,x_d(t)) :=\{y \in \mathbb{R}^n~|~\norm{y - x_d(t)} \leq \rho\}.
\end{equation}
Here $\norm{\cdot}$ denotes the Euclidean norm. The norm balls induce the compact set
\begin{equation}\label{eqn:tube}
    \mathcal{O}_{x_d}(\rho) = \cup_{t \in [0,T_f]} \Omega (\rho,x_d(t)),
\end{equation} which we refer to as the tube.
\end{definition}
 Note that $\mathcal{X}$ is not a safe set since it does not consider the knowledge of the environment and obstacles, and it only represents the maximal limits of the state-space without which the planning cannot be feasible
 .
Additionally, since $\bar{F}$ (nominal dynamics) is known, any model-based planner can thus generate the desired pair $(x_d(t),u_d(t))$ satisfying the state-constraints.

Since we show that the actual state $x(t) \in \mathcal{O}_{x_d}(\rho)$ over the time-horizon of planning, to ensure feasibility, we place the following assumption.
\begin{assumption}\label{assmp:planner}
Given any tube width $\rho>0$ and planning horizon $[0,T_f]$, $0 < T_f \leq \infty$, the planner produces a state-input pair $(x_d(t),u_d(t))$ such that the induced tube $\mathcal{O}_{x_d}(\rho)$ defined in~\eqref{eqn:tube} satisfies
\[
\mathcal{O}_{x_d}(\rho) \in \mathcal{X}, \quad \forall t \in [0,T_f].
\] Furthermore, the desired control input $u_d(t)$ satisfies
\[
\norm{u_d(t)} \leq \Delta_{u_d}, \quad \forall t \in [0,T_f],
\] with the upper bound known.
\end{assumption}

Given any $\rho>0$, based on Assumptions~\ref{assmp:models:functions}-\ref{assmp:planner}, we have that for all $\xi \in \mathbb{R}^l$ and the compact convex set $\mathcal{X} \subset \mathbb{R}^n$
\begin{subequations}\label{eqn:model_bounds}
\begin{align}
    &\norm{f(x)} \leq \Delta_f,~\norm{\pdv{f}{x}(x)} \leq \Delta_{f_x},~\sum_{i=1}^n \norm{\pdv{B}{x_i}(x)} \leq \Delta_{B_x},~
\sum_{j=1}^m \norm{\pdv{[b]_{\cdot,j}}{x}(x)} \leq \Delta_{b_x}, \quad \forall x \in \mathcal{O}_{x_d}(\rho) \subset \mathcal{X},\\
&\norm{B^\dagger(x)} \leq \Delta_{B^\dagger},~\sum_{i = 1}^n \norm{ \pdv{B^\dagger}{x_i}(x)    } \leq \Delta_{B_x^\dagger}, \quad \forall x \in \mathcal{O}_{x_d}(\rho) \subset \mathcal{X},\\
&\norm{h(\xi,x)} \leq \Delta_h,~\norm{\nabla_x h(\xi,x)} \leq \Delta_{h_x},~\norm{\nabla_\xi h(\xi,x)} \leq \Delta_{h_\xi}, \quad \forall (x,\xi) \in \mathcal{X} \times \mathbb{R}^l,
\end{align}
\end{subequations} where where $[b]_{\cdot,j}(x)$ denotes the $j^{th}$ column of $B(x)$, $B^\dagger(x) = \left(B^\top(x) B(x)   \right)^{-1} B^\top (x)$ denotes the Moore-Penrose inverse which is guaranteed to exist by Assumption~\ref{assmp:models:functions}, and $\nabla$ denotes the gradient with respect to the sub-scripted variable.


\noindent \textbf{Problem Statement:}
Given the learned probabilistic estimates of the uncertainty $h(\xi,x)$, any desired state-input pair $(x_d(t),u_d(t))$, $t \in [0,T_f]$, designed by a planner using the nominal dynamics, and the desired robustness margins, the goal is to design the control input $u(t)$ that guarantees the existence of an apriori computable tube-width $\rho$  so that the state of the uncertain dynamics in~\eqref{eqn:system_dynamics} satisfies $x(t) \in \Omega(\rho,x_d(t)) \subset \mathcal{O}_{x_d}(\rho)$ with high probability, for all $t \geq 0$, from all initial conditions $x_0 \in D = \mathcal{X}$, while satisfying the robustness requirements. Importantly, the existence of the pre-computable tubes should not depend on the quality of the learned estimates, thus ensuring that \textit{safety remains decoupled from learning}. The learning should only affect the performance bounds and the optimality of the planned trajectory.

We now discuss the two constituent components of the $\mathcal{CL}_1$-$\mathcal{GP}$ control, namely Bayesian learning and the $\mathcal{CL}_1$ control.

\subsection{Bayesian learning} The probabilistic estimates of the uncertainty $h(\xi(t),x(t))$ in~\eqref{eqn:system_dynamics} are learned using GP regression. We place the following assumption to compute the prediction error bounds.
\begin{assumption}\label{assmp:GP}

We assume that each of the elements $[h]_i(\xi,x)$ are independent. Moreover, we assume that each element is a sample from a GP
\[
[h]_i(\xi,x) = [h]_i(z) \sim \mathcal{GP}(0,K_i(z,z')), \quad i \in \{1,\dots,m\},
\]
where $z = \begin{bmatrix}\xi^\top & x^\top \end{bmatrix}^\top \in \mathbb{R}^{l+n}$ and the kernel functions $K_i: \mathbb{R}^{(l+n) \times (l+n)} \rightarrow \mathbb{R}$ are known. Moreover, the kernels are twice-continuously differentiable with known constants $L_{K_i}$, $\nabla_\xi L_{K_i}$, $\nabla_x L_{K_i}$  such that
\[
L_{K_i} =
\max_{z,z' \in \mathcal{Z}}
\norm{
\nabla_z K_i(z,z')
}, \quad
\nabla_\xi L_{K_i} =  \max_{z,z' \in \mathcal{Z}} \norm{\nabla_\xi^2 K_i(z,z')}, \quad
\nabla_x L_{K_i} =  \max_{z,z' \in \mathcal{Z}} \norm{\nabla_x^2 K_i(z,z')},
\]
for $i \in \{1,\dots,m\}$, where $\mathcal{Z} = \mathcal{X}_\xi \times \mathcal{X}$, where $\mathcal{X}_\xi \subset \mathbb{R}^l$ is a convex compact set.
\end{assumption}
The assumption that the uncertainty is a sample from a GP with a known prior is less conservative than requiring the uncertainty to be a member of the reproducing kernel Hilbert space (RKHS) associated with the kernel. For example, sample functions of GPs with squared-exponential (SE) kernels correspond to continuous functions, whereas the associated RKHS space contains only analytic functions~\cite{van2011information}. Moreover, the constants assumed to exist in Assumption~\ref{assmp:GP} are easily computable, for example, for the often used squared-exponential (SE) kernel.

\begin{assumption}
$\xi(t) \in \mathcal{X}_\xi$
\end{assumption}

We assume that we have $N \in \mathbb{N}$ measurements of the form
\[
y_k = h(\xi_k,x_k) + \kappa  = h(z_k) + \kappa  = B^\dagger (x_k)\left(\dot{x}_k - f(x_k)  \right)-u_k + \kappa \in \mathbb{R}^m,
\quad k \in \{1,\dots,N\},
\]
where $\kappa$ represents measurement noise distributed normally as $\kappa \sim \mathcal{N}(0_m,\sigma^2 \mathbb{I}_m)$, where $0_m \in \mathbb{R}^m$ is a vector of zeros and $\mathbb{I}_m$ denotes the identity matrix of dimension $m$. Using the measurements, we set up the data as
\begin{equation}\label{eqn:data}
\mathbf{D} = \{\mathbf{Y},\mathbf{Z}\}, \quad \mathbf{Y} = \begin{bmatrix} y_1 & \cdots & y_N \end{bmatrix} \in \mathbb{R}^{m \times N}, \quad \mathbf{Z} = \begin{bmatrix} z_1 & \cdots & z_N \end{bmatrix} \in \mathbb{R}^{(l+n) \times N}.
\end{equation}
Thus, for each of the constituent functions $[h]_i$, $i \in \{1,\dots,m\}$, we have the data as $\mathbf{D}_i = \{[\mathbf{Y}]_{i,\cdot},\mathbf{Z}\}$, where $[\mathbf{Y}]_{i,\cdot}$ denotes the $i^{th}$-row of the matrix $\mathbf{Y}$. GP regression proceeds by conditioning the prior in Assumption~\ref{assmp:GP} on the measured data as in~\cite{williams2006gaussian} to obtain the posterior distribution at any test point $z^\star \in \mathcal{X}_\xi \times \mathcal{X}$ as
\begin{equation}\label{eqn:function:posterior}
\mathbb{R} \ni [h]_i(z^\star) \sim \mathcal{N}\left( \nu_{i,N}(z^\star),\sigma^2_{i,N}(z^\star) \right), \quad i \in \{1,\dots,m\},
\end{equation}
where the mean $\nu_{i,N}(z^\star)$ and the variance $\sigma^2_{i,N}(z^\star)$ are defined as
\begin{align*}
\nu_{i,N}(z^\star) = & K_i(z^\star,\mathbf{Z})^\top \left[K_i(\mathbf{Z},\mathbf{Z}) + \sigma^2 \mathbb{I}_N  \right]^{-1} \left( [\mathbf{Y}]_{i,\cdot}  \right)^\top,\\
\sigma^2_{i,N}(z^\star) = & K_i(z^\star,z^\star) -  K_i(z^\star,\mathbf{Z})^\top \left[K_i(\mathbf{Z},\mathbf{Z}) + \sigma^2 \mathbb{I}_N  \right]^{-1} K_i(z^\star,\mathbf{Z}).
\end{align*}
Furthermore, $ K_i(z^\star,\mathbf{Z}) \in \mathbb{R}^N$ and $K_i(\mathbf{Z},\mathbf{Z}) \in \mathbb{S}^N$ are defined as
\begin{align*}
    K_i(z^\star,\mathbf{Z}) =& \begin{bmatrix} K_i(z^\star,z_1) & \cdots &  K_i(z^\star,z_N)  \end{bmatrix}, \quad \left[K_i(\mathbf{Z},\mathbf{Z})\right]_{p,q} =  K_i(z_p,z_q), \quad (p,q) \in \{1,\dots,N\} \times \{1,\dots,N\}.
\end{align*}

Using the linearity of the differential operator, we can also compute the posterior distributions of the partial derivatives of $h(\xi,x)$ using the previously defined data in~\eqref{eqn:data}. The posterior distributions of the partial derivatives are given by
\begin{subequations}\label{eqn:function_derivative:posterior}
\begin{align}
\left(\nabla_\xi [h]_i(z^\star)\right)^\top   \sim & \mathcal{N}\left(\nabla_\xi \nu_{i,N}(z^\star)^\top, \nabla_\xi \sigma_{i,N}^2(z^\star)   \right),\\ \left(\nabla_x [h]_i(z^\star)\right)^\top  \sim & \mathcal{N}\left(\nabla_x \nu_{i,N}(z^\star)^\top, \nabla_x \sigma_{i,N}^2(z^\star)   \right),
\end{align}
\end{subequations}
where the mean functions $\nabla_\xi \nu_{i,N}(z^\star) \in \mathbb{R}^{1 \times l}$, $\nabla_x \nu_{i,N}(z^\star) \in \mathbb{R}^{1 \times n}$ and the variance functions $\nabla_\xi \sigma_{i,N}^2(z^\star)  \in \mathbb{S}^l$, $\nabla_x \sigma_{i,N}^2(z^\star) \in \mathbb{S}^n$ are defined as
\begin{align*}
    \nabla_\xi \nu_{i,N}(z^\star)^\top = & \left(\nabla_\xi K_i(z^\star,\mathbf{Z})\right)^\top \left[ K_i(\mathbf{Z},\mathbf{Z}) + \sigma^2 \mathbb{I}_N\right]^{-1}\left([\mathbf{Y}]_{i,\cdot}\right)^\top,\\
    \nabla_x \nu_{i,N}(z^\star)^\top =&  \left(\nabla_x K_i(z^\star,\mathbf{Z})\right)^\top \left[ K_i(\mathbf{Z},\mathbf{Z}) + \sigma^2 \mathbb{I}_N\right]^{-1}\left([\mathbf{Y}]_{i,\cdot}\right)^\top,                    \\
    \nabla_\xi \sigma_{i,N}^2(z^\star) =&  \nabla_{\xi,\xi'}^2 K_i (z^\star,z^\star) - \left(\nabla_\xi K_i(z^\star,\mathbf{Z})\right)^\top \left[ K_i(\mathbf{Z},\mathbf{Z}) + \sigma^2 \mathbb{I}_N\right]^{-1}\nabla_\xi K_i(z^\star,\mathbf{Z}),             \\
    \nabla_x \sigma_{i,N}^2(z^\star) =& \nabla_{x,x'}^2 K_i (z^\star,z^\star) - \left(\nabla_x K_i(z^\star,\mathbf{Z})\right)^\top \left[ K_i(\mathbf{Z},\mathbf{Z}) + \sigma^2 \mathbb{I}_N\right]^{-1}\nabla_x K_i(z^\star,\mathbf{Z}).
\end{align*}

Note that $\left(\nabla_\xi [h]_i(z^\star)\right)^\top \in \mathbb{R}^{l}$ and $\left(\nabla_x [h]_i(z^\star)\right)^\top \in \mathbb{R}^{n}$ are multivariate Gaussian random variables for each $i \in \{1,\dots,m\}$. Moreover, the individual elements of each are co-related in the general case as indicated by the presence of non-zero off-diagonal terms in the posterior covariance matrices $\nabla_\xi \sigma_{i,N}^2(z^\star)$ and $\nabla_x \sigma_{i,N}^2(z^\star)$. However, by marginalizing as in~\cite[Sec.~2.3.1]{bishop2006pattern} we can obtain the individual posterior distributions of each component as
\begin{subequations}\label{eqn:function_derivative:individual_posterior}
     \begin{align}
         \left[\nabla_\xi h\right]_{i,k}(z^\star)   \sim & \mathcal{N}\left(\left[\nabla_\xi \nu_{i,N}\right]_k(z^\star), \left[\nabla_\xi \sigma_{i,N}^2\right]_{k,k}(z^\star)   \right), \quad (i,k) \in \{1,\dots,m\} \times \{1,\dots,l\}, \\
         \left[\nabla_x h\right]_{i,k}(z^\star)  \sim & \mathcal{N}\left(\left[\nabla_x \nu_{i,N}\right]_k(z^\star), \left[\nabla_x \sigma_{i,N}^2\right]_{k,k}(z^\star)   \right), \quad (i,k) \in \{1,\dots,m\} \times \{1,\dots,n\}.
     \end{align}
\end{subequations}

\subsection{\rellone control}

In the presented methodology, the control input $u(t)$ is computed using the $\mathcal{CL}_1$ control as presented in~\cite{lakshmanan2020safe}. The $\mathcal{CL}_1$ control can be decomposed as
\begin{equation}\label{eqn:Rl1_input}
    u(t) = u_c(t) + u_a(t),
\end{equation}
where $u_c(t)$ is the control input designed for the known dynamics and relies on the contraction theoretic notion of Riemannian energy~\cite{lopez2020adaptive,singh2019robust,manchester2017control}, whereas $u_a(t)$ is the adaptive control input designed based on the $\mathcal{L}_1$ adaptive control theory~\cite{wang2012l1,hovakimyan2010L1} and is tasked with compensating for the model uncertainties. The existence of the $u_c(t)$ input relies on the existence of the \textit{control contraction metric (CCM)}~\cite{manchester2017control}, which is defined to be any smooth function $M(x)$, satisfying for all $(x,\delta_x) \in T \mathcal{X}$ (the tangent bundle of $\mathcal{X}$):
\begin{subequations}\label{eqn:CCM_conditions}
\begin{align}
   &\underline{\alpha} \mathbb{I}_n \succeq M(x) \succeq \bar{\alpha} \mathbb{I}_n, \\
   &\label{eqn:killing}\partial_{[b]_{\cdot,j}} M(x) + \left[M(x) \frac{\partial [b]_{\cdot,j}(x)}{\partial_x}  \right]_\mathbb{S} = 0, \quad j \in \{1,\dots,m\}, \\
   &\delta_x^\top M(x)B(x) = 0 \Rightarrow \delta_x^\top \left( \partial_f M(x) + \left[M(x) \frac{\partial f(x)}{\partial x} \right]_\mathbb{S} + 2 \lambda M(x)    \right)\delta_x \leq 0,
   \end{align}
\end{subequations}
for some scalars $\lambda > 0$, $0 < \underline{\alpha} < \bar{\alpha} < \infty$. Here $[b]_{\cdot,j}$ denotes the $j^{th}$ column of $B(x)$ and $\partial_f M(x)$ denotes the directional derivative of $M(x)$ with respect to $f(x)$. The same holds for $\partial_{[b]_{\cdot,j}} M(x)$. Moreover, $[A]_{\mathbb{S}}$ denotes the symmetric part of the matrix $A$.
Further details are presented in~\cite{manchester2017control} and~\cite{lakshmanan2020safe}. Note that the synthesis of the CCM $M(x)$ depends only on the nominal dynamics and can be computed offline. We place the following assumption.
\begin{assumption}\label{assmp:CCM}
The nominal dynamics $\bar{F}$ in~\eqref{eqn:system_dynamics} admit a CCM $M(x)$, for all $x \in \mathcal{X}$, and for some positive constants $\lambda$, $\underline{\alpha}$, and $\bar{\alpha}$ as in~\eqref{eqn:CCM_conditions}.
\end{assumption}
The CCM $M(x)$ defines a control Lypunov function (CLF) for the nominal system in the form of the Riemannian energy $\mathcal{E}(x_d(t),x(t))$~\cite{manchester2017control,lakshmanan2020safe}. Using the Riemannian energy and the given desired state-input pair $(x_d(t),u_d(t))$ computed using the nominal dynamics $\dot{x}_d(t) = \bar{F}(x_d(t),u_d(t))$, the input $u_c(t)$ is given by
\begin{equation}\label{eqn:CCM_input}
u_c(t)=u_d(t)+k_c(x_d(t),x(t)),
\end{equation}
where
\begin{subequations}\label{eqn:CCM_optimization}
\begin{align}
    &k_c(x_d(t),x(t))= \argmin_{k\in\mathbb{R}^m} \norm{k}^2,\\ &\text{s.t.}~2\bar{\gamma}_{\bar{s}}^{\top}(1,t)M(x(t))\dot{x}_k(t)-2\bar{\gamma}_{\bar{s}}^{\top}(0,t)M(x_d(t))\dot{x}_d(t)\leq -2\lambda\mathcal{E}(x_d(t),x(t)).
\end{align}
\end{subequations}
Here $\bar{\gamma}(\bar{s},t)$, $\bar{s} \in [0,1]$, is the minimal geodesic between $x_d(t)$ and $x(t)$ on the Riemannian manifold $(\mathcal{X},M)$ with $\bar{\gamma}(1,t) = x(t)$ and $\bar{\gamma}(0,t) = x_d(t)$. Furthermore, $\dot{x}_k(t) = \bar{F}(x(t),u_d(t) + k)$.
The quadratic program (QP) admits an analytical solution as explained in~\cite[Sec.~5.1]{singh2019robust}.

The $\mathcal{L}_1$ adaptive control input $u_a(t)$ in~\eqref{eqn:Rl1_input} relies on three components: the state-predictor, the adaptation law, and the control law. The \textit{state-predictor} is given by
\begin{equation}\label{eqn:state-predictor}
    \dot{\hat{x}}(t) = \bar{F}(x(t),u_c(t) + u_a(t) + \hat{\mu}(t)) + A_m \tilde{x}(t), \quad \hat{x}(0) = x_0,
\end{equation} where $\hat{x}(t)$ is the state of the predictor, $\tilde{x}(t) = \hat{x}(t) - x(t)$ is the state prediction error, and $A_m \in \mathbb{R}^{n \times n}$ is an arbitrary Hurwitz matrix. The \textit{uncertainty estimate} $\hat{\mu}(t)$ is driven by the state prediction error via the following \textit{adaptation law}
\begin{equation}\label{eqn:adaptation_law}
\dot{\hat{\mu}}(t) = \Gamma \text{Proj}_\mathcal{H} \left(\hat{\mu}(t), -B^\top (x) P \tilde{x}(t)   \right), \quad \hat{\mu}(0) \in \mathcal{H},
\end{equation}
where $\Gamma > 0$ is the adaptation rate, $\mathcal{H} = \{y \in \mathbb{R}^m~|~\norm{y} \leq \Delta_h\}$ is the conservative set within which the uncertainty estimate is restricted to lie in with $\Delta_h$ defined in~\eqref{eqn:model_bounds}. Additionally, $\mathbb{S}^n \ni P \succ 0$ is the solution to the Lyapunov equation $A_m^\top P + P A_m = -Q$, for some $\mathbb{S}^n \ni Q \succ 0$. Finally, $\text{Proj}_\mathcal{H}(\cdot,\cdot)$ is the standard projection operator~\cite{hovakimyan2010L1}. Finally, the input $u_a(t)$ is defined via the following \textit{control law} presented using the Laplace transform
\begin{equation}\label{eqn:control_law}
u_a(s) = -C(s)\hat{\mu}(s),
\end{equation} where $C(s)$ is a low-pass filter with bandwidth $\omega$ and satisfies $C(0) = \mathbb{I}_m$. Note that we use the variable $s$ to represent both the Laplace variable and the geodesic parameter in~\eqref{eqn:CCM_optimization}. The distinction is clear from context.

We now briefly explain how the $\mathcal{CL}_1$ guarantees safety by the existence of pre-computable tubes using only the available conservative knowledge presented in Assumptions~\ref{assmp:models:functions} and~\ref{assmp:models:functions_derivatives}, i.e., without any learning. For arbitrarily chosen positive scalars $\rho_a$ and $\epsilon$, and define
\begin{equation}\label{eqn:tube_def}
 \rho_r = \sqrt{\frac{\bar{\alpha}}{\underline{\alpha}}} \norm{x_d(0) - x_0} + \epsilon, \quad \rho = \rho_r + \rho_a,
\end{equation} where $\bar{\alpha}$ and $\underline{\alpha}$ are defined in Assumption~\ref{assmp:CCM}. Let us define the following constants
\begin{subequations}\label{eqn:RL1_bounds}
\begin{align}
\label{eqn:bounds:M_x}
\Delta_{M_x} &:= \sup_{x \in \mathcal{O}_{x_d}(\rho)} \sum_{i=1}^n \norm{\pdv{M}{x_i}(x)}, \\
\label{eqn:bounds:psi_x}
\Delta_{\Psi_x} &:= 2\Delta_{B_x} + \frac{\Delta_B \Delta_{M_x}}{\ualpha},\\
\label{eqn:bounds:delta_u}
\Delta_{\delta_u} &:= \frac{1}{2}\sup_{x \in \mathcal{O}_{x_d}(\rho)}\left(\frac{\olambda(L^{-\top}(x)Z(x)L^{-1}(x))}{\usigma_{> 0}(B^\top(x)L^{-1}(x))}\right),\\
 \label{eqn:bounds:dx_r}
\Delta_{\dot{x}_r} &:= \Delta_f + \Delta_B(\norm{\mathbb{I}_m - C(s)}_{\mathcal{L}_1}\Delta_h + \Delta_{u_d} + \rho\Delta_{\delta_u}), \\
\label{eqn:bounds:dx}
\Delta_{\dot{x}} &:= \Delta_f + \Delta_B(2\Delta_h + \Delta_{u_d} + \rho\Delta_{\delta_u}),\\
\label{eqn:bounds:tx}
\Delta_{\tilde{x}} &:= \sqrt{\frac{4\olambda(P)\Delta_h(\Delta_{h_\xi} + \Delta_{h_x}\Delta_{\dot{x}})}{\ulambda(P)\underline{\lambda}(Q)} + \frac{4\Delta_h^2}{\underline{\lambda}(P)}}, \\
\label{eqn:bounds:teta}
\Delta_{\tilde{\eta}} &:= \left(\Delta_{B^\dagger_x} \Delta_{\dot{x}} + (\norm{sC(s)}_{\mathcal{L}_1} + \norm{A_m}) \Delta_{B^\dagger} \right)\Delta_{\tilde{x}},\\
\label{eqn:bounds:theta}
\Delta_\theta &:= \frac{\Delta_B \overline{\alpha}  \Delta_{\tilde{\eta}}}{\lambda},\\
\label{eqn:bounds:dpsi}
\Delta_{\dot{\Psi}} &:= \oalpha\left(
    \Delta_B \Delta_{\dot{\ogamma}_{\bar{s}}} + \frac{\Delta_B \Delta_{M_x} \Delta_{\dot{x}}}{\sqrt{\oalpha\ualpha}} + \Delta_{B_x} \Delta_{\dot{x}}\right), \\
\label{eqn:bounds:dgamma_s}
\Delta_{\dot{\ogamma}_{\bar{s}}} &:= \sqrt{\frac{\oalpha}{\ualpha}}\left(\Delta_{f_x} + (\Delta_h + \Delta_{u_d
} + \rho \Delta_{\delta_u})\Delta_{b_x} + \left(\Delta_{h_x} + \frac{\sqrt{\ualpha}\Delta_{\delta_u}}{\sqrt{\oalpha}}\right) \Delta_B \right),
\end{align}
\end{subequations}
where $\mathcal{O}_{x_d}(\rho)$ is defined in~\eqref{eqn:tube}; $\Delta_{u_d}$, $\Delta_f$, $\Delta_{f_x}$, $\Delta_B$, $\Delta_{B_x}$, $\Delta_{b_x}$, $\Delta_h$, $\Delta_{h_\xi}$, $\Delta_{h_x}$, $\Delta_{B^\dagger}$ and $\Delta_{B_x^\dagger}$ are defined in~\eqref{eqn:model_bounds}; $\oalpha$ and $\ualpha$ are defined in \cref{assmp:CCM}; and $Z(x)$ is defined as
\[
Z(x):=-\partial_{f} W(x)+ 2\Sym{\pdv{f}{x}(x)W(x)} + 2 \lambda W(x),
\]
where $W(x) = M(x)^{-1}$ is referred to as the dual metric and $L(x)^\top L(x) = W(x)$, and these entities are guaranteed to exist due to the positive definiteness of the CCM $M(x)$. As before, $\partial_{f} W(x)$ denotes the directional derivative of the dual metric $W(x)$ with respect to $f(x)$~\cite{manchester2017control}. Furthermore, $\norm{sC(s)}_{\mathcal{L}_1}$ denotes the $\mathcal{L}_1$ function norm of the impulse response of $sC(s)$~\cite[Sec.~A.7]{hovakimyan2010L1}. Finally, for any real-valued matrices $A$ and $B$, with $B$ square, $\usigma_{> 0} (A)$, $\olambda (B)$, and $\ulambda (B)$, denote the smallest singular-value of $A$, and the largest and smallest eigenvalues of $B$, respectively.
For the purposes of analysis, we need the following constants
\begin{subequations}\label{eqn:RL1:kappas}
\begin{align}
    \kappa_1(\Delta_h,\Delta_{h_x},\Delta_{h_\xi}) =& 2 \rho \Delta_B \frac{\bar{\alpha}}{\underline{\alpha}} \left(
    \frac{\Delta_h}{|2\lambda/\omega - 1|} + \frac{\Delta_{h_\xi} + \Delta_{h_x} \Delta_{\dot{x}_r}}{2 \lambda},
    \right), \\
    \kappa_2(\Delta_h,\Delta_{h_x},\Delta_{h_\xi}) =& \bar{\alpha} \Delta_{\Psi_x} \frac{\bar{\alpha}}{\underline{\alpha}} \left(
    \frac{\Delta_h}{|2\lambda/\omega - 1|} + \frac{\Delta_{h_\xi} + \Delta_{h_x} \Delta_{\dot{x}_r}}{2 \lambda}
    \right),\\
    \kappa_3(\Delta_h,\Delta_{h_x}) =& \bar{\alpha} \Delta_{h_x} \left(
    \frac{4 \lambda \Delta_B + \Delta_{\dot{\Psi}}}{\lambda}
    \right),\\
    \kappa_4(\Delta_h,\Delta_{h_x},\Delta_{h_\xi}) =& \Delta_\theta,
\end{align}
\end{subequations} using which we further define
\begin{align}\label{eqn:RL1_constants}
    \zeta_1(\kappa_1,\omega) =& \frac{\kappa_1(\Delta_h,\Delta_{h_x},\Delta_{h_\xi})}{\omega},~ \zeta_2(\kappa_2,\omega) = \frac{\kappa_2(\Delta_h,\Delta_{h_x},\Delta_{h_\xi})}{\omega},~ \zeta_3(\kappa_3,\omega) = \frac{\kappa_3(\Delta_h,\Delta_{h_x})}{\omega}.
\end{align}
 Note that $\kappa_1-\kappa_3$ are monotonically increasing as a function of the conservative known bounds $\Delta_h$, $\Delta_{h_x}$, and $\Delta_{h_\xi}$, and vanish for zero uncertainty bounds.

 The rate of adaptation $\Gamma$ in~\eqref{eqn:adaptation_law} and filter bandwidth $\omega$ in~\eqref{eqn:control_law} then need to satisfy
\begin{equation}\label{eqn:RL1_conditions}
    \rho_r^2 \geq \frac{\mathcal{E}(x_d(0),x_0)}{\underline{\alpha}} + \zeta_1(\kappa_1,\omega), ~\underline{\alpha} > \zeta_2(\kappa_2,\omega) + \zeta_3(\kappa_3,\omega), ~ \sqrt{\Gamma} > \frac{\kappa_4(\Delta_h,\Delta_{h_x},\Delta_{h_\xi})}{\rho_a ( \underline{\alpha} - \zeta_2(\kappa_2,\omega) - \zeta_3(\kappa_3,\omega)   )}.
\end{equation}
Since the constants $\zeta_1(\omega),\zeta_2(\omega),\zeta_3(\omega) \propto 1/\omega$, the conditions in~\eqref{eqn:RL1_conditions} can always be satisfied by choosing an appropriately large adaptation rate $\Gamma$ and filter bandwidth $\omega$. The following result quantifies the existence of safe tubes around any desired state $x_d(t)$. The detailed proof of the following theorem can be found in~\cite[Thm.~5.1]{lakshmanan2020safe}.
\begin{theorem}[\cite{lakshmanan2020safe}]\label{thm:RL1}
Let Assumptions~\ref{assmp:models:functions},~\ref{assmp:models:functions_derivatives}, and~\ref{assmp:CCM} hold and let the filter bandwidth $\omega$ and rate of adaptation $\Gamma$ satisfy~\eqref{eqn:RL1_conditions}. Given any desired state input pair $(x_d(t),u_d(t))$ satisfying the nominal dynamics $\dot{x}_d(t) = \bar{F}(x_d(t),u_d(t))$, the state of the actual uncertain system~\eqref{eqn:system_dynamics}, $\dot{x}(t) = F(\xi(t),x(t),u(t))$, with control input~\eqref{eqn:Rl1_input} satisfies
\[
x(t) \in \Omega(\rho,x_d(t)) \subset \mathcal{O}_{x_d}(\rho), \quad \forall t \geq 0.
\]
Furthermore, the state $x(t)$ is uniformly ultimately bounded as
\[
x(t) \in \Omega(\delta(\omega,T),x_d(t)) \subset \Omega(\rho,x_d(t)), \quad \forall t \geq T > 0,
\]
where the uniform ultimate bound is given by
\[
\delta(\omega,T) = \mu(\omega,T) + \rho_a, \quad \mu(\omega,T) = \sqrt{e^{-2\lambda T}\mathcal{E}(x_d(0),x_0)/\underline{\alpha}  + \zeta_1(\kappa_1,\omega)            }.
\]
\end{theorem}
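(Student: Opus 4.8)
This is exactly the contraction-plus-$\mathcal{L}_1$ closed-loop result of~\cite{lakshmanan2020safe}, so the plan is to reconstruct that argument. Take the Riemannian energy $\mathcal{E}(t):=\mathcal{E}(x_d(t),x(t))$ induced by the CCM $M$ (\cref{assmp:CCM}) as a Lyapunov-type certificate along the closed loop~\eqref{eqn:system_dynamics} with $u=u_c+u_a$. By the first-variation formula for the minimal geodesic $\ogamma(s,t)$,
\[
\dot{\mathcal{E}}(t)=2\ogamma_s^{\top}(1,t)M(x(t))\dot{x}(t)-2\ogamma_s^{\top}(0,t)M(x_d(t))\dot{x}_d(t),
\]
and writing $\dot{x}=\bar{F}(x,u_c)+B(x)(u_a+h)$ while using feasibility of the QP~\eqref{eqn:CCM_optimization} (guaranteed by~\cref{assmp:CCM}), the $\bar{F}(x,u_c)$/$\dot{x}_d$ part contributes at most $-2\lambda\mathcal{E}$, leaving
\[
\dot{\mathcal{E}}(t)\le -2\lambda\mathcal{E}(t)+2\ogamma_s^{\top}(1,t)M(x(t))B(x(t))\bigl(u_a(t)+h(\xi(t),x(t))\bigr).
\]
Everything reduces to bounding this residual matched-uncertainty term and integrating the resulting scalar inequality.

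\noindent\textbf{Bootstrapping and the $\mathcal{L}_1$ estimates.} Set $\tau^{\ast}:=\sup\{\tau\ge0:\ x(t)\in\mathcal{O}_{x_d}(\rho)\ \forall t\in[0,\tau]\}$. By~\eqref{eqn:tube_def}, $x_0\in\Omega(\rho_r,x_d(0))\subset\mathcal{O}_{x_d}(\rho)$, so $\tau^{\ast}>0$ and on $[0,\tau^{\ast})$ every conservative bound of~\cref{assmp:models} is in force. On that interval: (i) with $\tilde{\mu}:=\hat{\mu}-h(\xi,x)$ the prediction error obeys $\dot{\tilde{x}}=A_m\tilde{x}+B(x)\tilde{\mu}$, and the Lyapunov function $\tilde{x}^{\top}P\tilde{x}+\Gamma^{-1}\norm{\tilde{\mu}}^{2}$, together with the projection-operator inequality for~\eqref{eqn:adaptation_law} (which cancels the sign-indefinite cross term) and the a-priori bounds $\norm{\tilde{\mu}}\le2\Delta_h$, $\norm{\tfrac{\diff}{\diff t}h(\xi(t),x(t))}\le\Delta_{h_\xi}+\Delta_{h_x}\Delta_{\dot{x}}$ (using $\norm{\dot{x}}\le\Delta_{\dot{x}}$, valid on $[0,\tau^{\ast})$), yields $\norm{\tilde{x}(t)}\le\Delta_{\tilde{x}}/\sqrt{\Gamma}$; (ii) inverting $\tilde{\mu}=B^{\dagger}(x)(\dot{\tilde{x}}-A_m\tilde{x})$ and pushing through the filter — handling $C(s)\dot{\tilde{x}}$ via $sC(s)$ and $\tfrac{\diff}{\diff t}B^{\dagger}(x(t))$ via $\Delta_{B_x^{\dagger}}\Delta_{\dot{x}}$ — bounds the filtered adaptation error $\norm{C(s)\tilde{\mu}}$ by $\Delta_{\tilde{\eta}}/\sqrt{\Gamma}$, whence $\Delta_\theta=\kappa_4$.

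\noindent\textbf{Closing the inequality.} Split $u_a+h=(\mathbb{I}_m-C(s))h-C(s)\tilde{\mu}$. The $C(s)\tilde{\mu}$ part is of order $1/\sqrt{\Gamma}$ by (ii) and generates the additive $\rho_a$-level contribution; since $C(0)=\mathbb{I}_m$, $\tfrac{1}{s}(\mathbb{I}_m-C(s))$ is a stable proper filter with $\mathcal{L}_1$ norm of order $1/\omega$, so — after absorbing the time-variation of $M(x)B(x)$ (hence the constants $\Delta_{M_x},\Delta_{\Psi_x},\Delta_{\dot{\Psi}}$) and of the geodesic tangent $\ogamma_s$ (hence $\Delta_{\dot{\ogamma}_s}$), and using $\norm{\ogamma_s(1,t)}^{2}\le\mathcal{E}(t)/\ualpha$ and $\norm{M(x)}\le\oalpha$ — the $(\mathbb{I}_m-C(s))h$ part contributes exactly the order-$1/\omega$ quantities $\zeta_1=\kappa_1/\omega$, $\zeta_2=\kappa_2/\omega$, $\zeta_3=\kappa_3/\omega$. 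Collecting terms gives a scalar comparison inequality for $\sqrt{\mathcal{E}}$ whose damping stays strictly positive precisely because $\ualpha>\zeta_2+\zeta_3$ and whose forcing is dominated precisely because $\sqrt{\Gamma}>\kappa_4/(\rho_a(\ualpha-\zeta_2-\zeta_3))$; by the comparison lemma $\mathcal{E}(t)\le e^{-2\lambda t}\mathcal{E}(x_d(0),x_0)+\ualpha\,\zeta_1(\kappa_1,\omega)+(\text{a term decaying to }\ualpha\rho_a^{2})$. Converting through $\norm{x(t)-x_d(t)}^{2}\le\mathcal{E}(t)/\ualpha$ and using the first and third conditions of~\eqref{eqn:RL1_conditions} yields $\norm{x(t)-x_d(t)}\le\mu(\omega,t)+\rho_a\le\rho_r+\rho_a=\rho$ on $[0,\tau^{\ast})$, and $\le\delta(\omega,T)$ for $t\ge T$. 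In particular $x(\tau^{\ast})$ lies in the interior of $\mathcal{O}_{x_d}(\rho)$, contradicting $\tau^{\ast}<\infty$; hence $\tau^{\ast}=\infty$ and both claims follow.

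\noindent\textbf{Main obstacle.} The delicate step is the $(\mathbb{I}_m-C(s))h$ estimate: since $h$ is evaluated at the regulated state $x(t)$, the bound on the residual matched uncertainty depends on the very trajectory bound one is trying to prove. The resolution is to couple this estimate to the bootstrap — all of~\cref{assmp:models} is available while $x\in\mathcal{O}_{x_d}(\rho)$ — and to exploit the order-$1/\omega$ decay, so a large enough bandwidth makes the residual small enough to re-establish strict invariance. By contrast, generating the constants in~\eqref{eqn:RL1_constants} from the state-dependence of $M,B,B^{\dagger}$ and the time-variation of $\ogamma_s$ is lengthy but mechanical bookkeeping of cross terms.
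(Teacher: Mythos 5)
Your proposal is correct and follows essentially the same route as the paper, which itself only cites \cite{lakshmanan2020safe} (Thm.~5.1) for this result: a Riemannian-energy differential inequality from the CCM/QP, the $\mathcal{L}_1$ decomposition $u_a+h=(\mathbb{I}_m-C(s))h-C(s)\tilde{\mu}$ with the projection-based $O(1/\sqrt{\Gamma})$ prediction-error bound, and a bootstrap on the tube $\mathcal{O}_{x_d}(\rho)$ closed by the conditions in~\eqref{eqn:RL1_conditions}. Your sketch also correctly identifies the roles of the constants in~\eqref{eqn:RL1_constants}--\eqref{eqn:RL1:kappas}, so no substantive discrepancy with the cited argument remains.
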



\section{\rellone-\gp : Riemannian Energy \ellone with Gaussian Process Learning}
\definecolor{lcolor}{rgb}{0.79,0.88,1.0}
\definecolor{bcolor}{RGB}{252,184,203}
\definecolor{ccolor}{RGB}{234,208,255}
\tikzstyle{sqblock} = [draw, fill=bcolor!20, rectangle,
    minimum height=3em, minimum width=3em]
\tikzstyle{block} = [draw, fill=bcolor!20, rectangle,
    minimum height=3em, minimum width=6em]
\tikzstyle{sum} = [draw, fill=bcolor!20, circle, node distance=1cm]
\tikzstyle{input} = [coordinate]
\tikzstyle{output} = [coordinate]
\tikzstyle{phantom} = [coordinate]

\begin{figure}[t]
  \centering
  \captionsetup{justification=centering}
    \begin{tikzpicture}[scale=1, transform shape,align=center] 

    \filldraw[fill=lcolor!70, densely dotted] (3.4,0.8) -- (7,0.8) -- (7,-4.5) -- (12.5,-4.5) -- (12.5,-7.5) -- (3.4,-7.5) -- cycle;
    \filldraw[fill=lcolor!30,draw=ccolor!30] (3.4, -7.5) rectangle (12.5,-8.5) node[pos=.5] {$\mathcal{CL}_1$ controller};

       \node [input, name=input] {};
        \node [block] (planner) [right=0.7cm of input] {Planner};
        \node [block] (ccm) [right=1.2cm of planner] {Riemannian\\Feedback};
        \node [sum] (sum) [below=1cm of ccm] {};
        \node [block] (sys) [right=2.6cm of sum] {Uncertain System};
        \node [sqblock] (filt) [below=2.8cm of sum] {$C(s)$};
        \node [block] (pred) [right=2.5cm of filt] {State Predictor};
        \node [block] (learn) [below=0.8cm of sys] {Bayesian Learner};
        \node [block] (adap) [below=0.5cm of pred] {Adaptation Law};
        \node [sum] (diff) [right=1.0cm of pred] {};
        \node [sum] (sum2) [right=0.75cm of filt] {};
        \node [phantom] (p2) at (sys -| diff) {};
        \node [phantom, above=1.0cm of p2] (p3) {};
        \node [phantom, below=1.9cm of p2] (p4) {};
        \node [phantom, above=2.8cm of p2] (p5) {};
        \node [phantom, right=1.3cm of sum] (p6) {};
        \node [phantom, below=1.9cm of p6] (p7) {};
        \node [phantom, above=2.5cm of sys] (p8) {};
        \node [phantom] (p9) at (adap -| sum2) {};
        \node [phantom] (p10) at (sys |- ccm) {};

        \draw[-Latex] (planner) -- node [midway,above] {$x_d, u_d$} (ccm);
        \draw[-Latex] (ccm) -- node [midway,left] {$u_{c,\hat F}$} (sum);
        \draw[-Latex] (sum) -- node [midway,above] {$u$} (sys);
        \draw[-Latex] (filt) -- node [midway,left] {$u_{a,\hat F}$} node[pos=0.99,left] {$+$} (sum);
        \draw[-Latex] (p9) -|  node[pos=0.99,left] {$+$} (sum2);
        \draw[-Latex] (p4) -- (learn);
        \draw[-Latex] (pred) -- node [midway,above] {$\hat{x}$} (diff);
        \draw[-Latex] (diff) |- node [near start,right] {$\tilde{x}$} (adap);
        \draw[-Latex] (p2) -- node[pos=0.99,right] {$-$} (diff);
        \draw (p2) -- (p5);
        \draw[-Latex] (p5) -| (ccm);
        \draw (p6) -- (p7);
        \draw[-Latex] (p7) -- (learn);
        \draw (sys) -- node [midway,above] {$x$} (p2);
        \node [output, right=1.0cm of p2] (output) {};
        \draw[-Latex] (p2) -- (output);
        \draw[dash dot,{Circle[black,length=1.5pt]}-Latex] (learn) -- (pred);
        \draw[dash dot,{Circle[black,length=1.5pt]}-] (learn) -- (sys);
        \draw[dash dot,-Latex] (p8) -| (planner);
        \draw[dash dot] (p8) -- (p10);
        \draw[dash dot,-Latex] (p10) -- (ccm);
        \draw[dash dot] (sys) |- node [near start,right]
        {$\nu_N(\xi ,x)$} (p10);
        \draw[-Latex] (sum2) -- (pred);
        \draw[-Latex] (p7) -- (sum2);
        \draw (adap) -| node [near start,above] {$\hat{\mu}$}(p9);
        \draw[-Latex] (p9) -| (filt);

    \end{tikzpicture}
  \caption{Architecture of \rellone-\gp control}
  \label{fig:RL1GP}
\end{figure}

We now present the \rellone-\gp control framework illustrated in ~\Cref{fig:RL1GP}. As presented in Theorem~\ref{thm:RL1}, the $\mathcal{CL}_1$ control by itself can ensure the safety of the nominal system $\bar{F}$ by using the conservative knowledge of the uncertainty presented in Assumptions~\ref{assmp:models:functions} and~\ref{assmp:models:functions_derivatives}. Moreover, performance can be gained by increasing the bandwidth of the low-pass filter. However, this comes at the expense of the robustness, and optimality is not improved. In this section, we show how Bayesian learning in the form of GP posterior distributions can be incorporated withing the $\mathcal{CL}_1$ framework. The goals of incorporating learning are two-fold, i) providing the planner with higher fidelity models so as to produce trajectories with improved optimality, and ii) improving the tracking performance without requiring the tradeoff with robustness.

Given the posterior distribution of $h(z) = h(\xi,x)$ in~\eqref{eqn:function:posterior}, we may re-write the uncertain dynamics in~\eqref{eqn:system_dynamics} as
\begin{subequations}\label{eqn:learned_dynamics}
\begin{align}
    &\dot{x}(t) = F(\xi(t),x(t),u(t)) =  f(x(t)) + B(x(t))\left(u(t) + h(\xi(t),x(t))  \right) \\
    &=  f(x(t)) + B(x(t))\nu_N(\xi(t),x(t)) +  B(x(t))\left(u(t) + h(\xi(t),x(t)) - \nu_N(\xi(t),x(t)) \right)\\
    &=  \hat{F}(\xi(t),x(t),u(t)) + B(x(t))\left(u(t) + h(\xi(t),x(t)) - \nu_N(\xi(t),x(t)) \right),
\end{align}
\end{subequations} where $\hat{F}(\xi,x,u) = f(x) + B(x)(u + \nu_N(\xi,x))$ represents the \textit{learned dynamics}, which we obtain by adding and subtracting $\nu_N(\xi,x)$ within the control channel. Here, we denote
\begin{equation}\label{eqn:concatenated_mean}
    \nu_N(\xi,x) = \nu_N(z) = \begin{bmatrix} \nu_{1,N}(z) & \cdots & \nu_{m,N}(z)  \end{bmatrix}^\top \in \mathbb{R}^m.
\end{equation}
As in~\eqref{eqn:system_dynamics}, $F(\xi,x,u)$ represents the actual dynamics, but now $\hat{F}(\xi,x,u)$ represents the learned dynamics as opposed to $\bar{F}(\xi,x)$ representing the conservative nominal model.

Consider a desired state-input pair $(x_d(t),u_d(t))$, which is now designed by the planner using the learned dynamics $\hat{F}$, i.e., $\dot{x}_d(t) = \hat{F}(\xi(t),x_d(t),u_d(t))$, as opposed to the nominal dynamics $\bar{F}$. Note that $\hat{F}$ contains the nominal dynamics and the mean dynamics of the GP predictive distribution in~\eqref{eqn:function:posterior}. That is, $\hat{F}$ is deterministic, and therefore any planner that is being used does not have to rely on uncertainty propagation to ensure safety.
The goal now is to design the input $u(t)$ such that the state $x(t)$ of~\eqref{eqn:learned_dynamics} tracks $x_d(t)$ while remaining inside of pre-computable tube with high probability.
Similar to the $\mathcal{CL}_1$ input in~\eqref{eqn:Rl1_input}, the \rellone-\gp input is composed as
\[
u(t) = \hat{u}_{c,\hat{F}}(t) + \hat{u}_{a,\hat{F}}(t),
\]
where the individual components mirror those in~\eqref{eqn:CCM_input},~\eqref{eqn:state-predictor},~\eqref{eqn:adaptation_law}, and~\eqref{eqn:control_law}, but are now designed for the learning-based representation of the dynamics in~\eqref{eqn:learned_dynamics}. However, the major distinction is that $\hat{u}_{c,\hat{F}}(t)$ is designed to track $x_d(t)$ using the learned dynamics $\hat{F}$ (as opposed to the nominal dynamics $\bar{F}$), and the adaptive input $\hat{u}_{a,\hat{F}}(t)$ now compensates for the \textit{remainder uncertainty} $h - \nu_N$  as opposed to the uncertainty $h$. Therefore, we need to quantify the `size' of the remainder uncertainty $h - \nu_N$ for controller design. For this purpose, we will use the posterior distributions of the uncertainty and its derivatives in~\eqref{eqn:function:posterior} and~\eqref{eqn:function_derivative:posterior}, respectively, to compute high probability estimates of these bounds. In particular, we use the recent results in~\cite{lederer2019uniform}. We begin by presenting the following definition.
\begin{definition}\label{def:uniform_bounds}
Given any $\tau>0$, we define by $\mathcal{Z}_\tau \subset \mathcal{Z} = \mathcal{X}_\xi \times \mathcal{X}$, any discrete set satisfying
\[
|\mathcal{Z}_\tau| < \infty, \quad \text{and} \quad \max_{z \in \mathcal{Z}} \min_{z' \in \mathcal{Z}_\tau} \norm{z - z'} \leq \tau,
\] where $|\mathcal{Z}_\tau|$ denotes the cardinality of the set $\mathcal{Z}_\tau$. Given the posterior distribution of the uncertainty estimate in~\eqref{eqn:function:posterior}, $\nu_N(z) \in \mathbb{R}^m$ is defined as in~\eqref{eqn:concatenated_mean}, and
\[
\sigma_N(z) = \begin{bmatrix}
\sigma_{1,N}(z) & \cdots & \sigma_{m,N}(z)
\end{bmatrix}^\top \in \mathbb{R}^m.
\]
Similarly, for the posterior distribution of the partial derivatives of the uncertainty in~\eqref{eqn:function_derivative:posterior} and the marginal distributions in~\eqref{eqn:function_derivative:individual_posterior}, we define $\Sigma_{\xi}^{i,N}(z) \in \mathbb{R}^l$ and $\Sigma_{x}^{i,N}(z) \in \mathbb{R}^n$ as
\begin{align}
    \left[ \Sigma_{\xi}^{i,N}(z) \right]_p = & \left[\nabla_\xi \sigma_{i,N}(z) \right]_{p,p}  = \sqrt{\left[\nabla_\xi \sigma^2_{i,N}(z) \right]_{p,p}} , \quad
    \left[ \Sigma_{x}^{i,N}(z) \right]_q =  \left[\nabla_x \sigma_{i,N}(z) \right]_{q,q} = \sqrt{\left[\nabla_x \sigma^2_{i,N}(z) \right]_{q,q}},
\end{align} for  $p \in \{1,\dots,l\}$, and $q \in \{1,\dots,n\}$.

Finally, for any $\delta \in (0,1)$,  let $M(\tau,\mathcal{Z})$ denote the $\tau$-covering number of $\mathcal{Z}$\footnote{The $\tau$-covering number of $\mathcal{Z}$ is defined as the minimum number of $\tau$-norm balls that completely cover the set $\mathcal{Z}$. As explained in~\cite{lederer2019uniform}, upper bounds for the covering number can be computed trivially. For example, for a hypercube set $\mathcal{Z} \subset \mathbb{R}^n$, $M(\tau,\mathcal{Z}) \leq \left( 1 + r/\tau   \right)^{n}$, where $r$ is the edge length of the hypercube}, using which we define
\[
\beta(\tau) = 2 \log \left( \frac{mM(\tau,\mathcal{Z})}{\delta} \right), \quad
\beta_\xi(\tau) = 2 \log \left( \frac{lmM(\tau,\mathcal{Z})}{\hat{\delta}} \right), \quad
\beta_x(\tau) = 2 \log \left( \frac{nmM(\tau,\mathcal{Z})}{\hat{\delta}} \right),
\] and where $\hat{\delta} = 1 - (1-\delta)^{\frac{1}{m}}$.
\end{definition}

We now present the following theorem in which we compute high probability bounds for the remainder uncertainty. The following result is a generalization of~\cite[Thm.~3.1]{lederer2019uniform}. The proof is presented in Appendix~\ref{app:uniform_bounds}.
\begin{theorem}\label{thm:uniform_bounds}
Let Assumption~\ref{assmp:GP} hold and consider the posterior distributions of the uncertainty in~\eqref{eqn:function:posterior}~-~\eqref{eqn:function_derivative:individual_posterior} and any $\delta \in (0,1)$ and $\tau >0$.  Let $z = \begin{bmatrix} \xi^\top & x^\top  \end{bmatrix}^\top \in \mathcal{Z} = \mathcal{X}_\xi \times \mathcal{X}$ and define
\begin{align*}
    \Delta h (z,\tau) = & \sqrt{\beta(\tau)} \norm{\sigma_N(z)} + \gamma(\tau),\quad
    \nabla_\xi \Delta h(z,\tau) =   \sqrt{\sum_{i=1}^m \left(\nabla_\xi \gamma_i (\tau) + \sqrt{\beta_\xi (\tau)} \norm{\Sigma_{\xi}^{i,N}(z) }       \right)^2 },\\
    \nabla_x \Delta h(z,\tau) = & \sqrt{\sum_{i=1}^m \left(\nabla_x \gamma_i (\tau) + \sqrt{\beta_x (\tau)} \norm{\Sigma_{x}^{i,N}(z) }       \right)^2 },
\end{align*} where $\beta(\tau)$, $\beta_\xi(\tau)$, $\beta_x(\tau)$, $\sigma_N(z)$, and $\Sigma_{\xi}^{i,N}(z)$, $\Sigma_{x}^{i,N}(z)$, $i \in \{1,\dots,m\}$, are presented in Definition~\ref{def:uniform_bounds}.
Furthermore,
\begin{align*}
\gamma(\tau) =& \left(\Delta_{h_x} + \Delta_{h_\xi} + L_{\nu_N}  \right)\tau + \sqrt{\beta(\tau)} \omega_N \left(\tau \right),
\quad \nabla_\xi \gamma_i (\tau) = \left(\nabla_\xi \Delta_{h_\xi}^i + \nabla_\xi L_{i,\nu_N}  \right)\tau + \sqrt{\beta_\xi (\tau)} \nabla_\xi \omega_{i,N}(\tau),\\
    \nabla_x \gamma_i (\tau) =& \left(\nabla_x \Delta_{h_x}^i + \nabla_x L_{i,\nu_N}  \right)\tau + \sqrt{\beta_x (\tau)} \nabla_x \omega_{i,N}(\tau),
\end{align*}
where
\begin{align*}
    L_{\nu_N} =& \sqrt{ N \left( \sum_{i=1}^m  L_{K_i}^2
\norm{ \left[K_i(\mathbf{Z},\mathbf{Z}) + \sigma^2 \mathbb{I}_N  \right]^{-1} \left([\mathbf{Y}]_{i,\cdot}\right)^\top }^2 \right)}, \\
\omega_N\left(\norm{z - z'}\right)
=& \sqrt{ 2 \norm{z - z'} \sum_{i=1}^m   L_{K_i}  \left(1 + N \norm{\left[K_i(\mathbf{Z},\mathbf{Z}) + \sigma^2 \mathbb{I}_N  \right]^{-1}} \max_{z,z' \in \mathcal{Z}} K_i(z,z')  \right) },
\end{align*}
and for $i \in \{1,\dots,m\}$,
\begin{align*}
        \nabla_\xi L_{i,\nu_N} =& \sqrt{N} \nabla_\xi L_{K_i} \norm{\left[ K_i(\mathbf{Z},\mathbf{Z}) + \sigma^2 \mathbb{I}_N\right]^{-1}\left([\mathbf{Y}]_{i,\cdot}\right)^\top}, \\
    \nabla_x L_{i,\nu_N} =& \sqrt{N} \nabla_x L_{K_i} \norm{\left[ K_i(\mathbf{Z},\mathbf{Z}) + \sigma^2 \mathbb{I}_N\right]^{-1}\left([\mathbf{Y}]_{i,\cdot}\right)^\top},
\end{align*}
and
\begin{align*}
    \nabla_\xi \omega_{i,N} \left( \norm{z - z'} \right) =&   \sqrt{ 2 \norm{z - z'} \nabla_\xi L_{K_i} \left( 1 + N \norm{\left[ K_i(\mathbf{Z},\mathbf{Z}) + \sigma^2 \mathbb{I}_N\right]^{-1}}   \sum_{k=1}^l   \max_{z,z' \in \mathcal{Z}} \left| \pdv{K_i}{\xi_k}(z,z')\right| \right)  },\\
 \nabla_x \omega_{i,N} \left( \norm{z - z'} \right)
 =&   \sqrt{ 2 \norm{z - z'} \nabla_x L_{K_i} \left( 1 + N \norm{\left[ K_i(\mathbf{Z},\mathbf{Z}) + \sigma^2 \mathbb{I}_N\right]^{-1}}  \sum_{k=1}^n   \max_{z,z' \in \mathcal{Z}} \left| \pdv{K_i}{x_k}(z,z')\right| \right)  }.
\end{align*} Here, $N$ is the size of the data set $\mathbf{D} = \{\mathbf{Y},\mathbf{Z}\}$ in~\eqref{eqn:data}, and the constants $\Delta_h$, $\Delta_{h_x}$, $\Delta_{h_\xi}$, $\nabla_\xi \Delta_{h_\xi}^i$, and $\nabla_x \Delta_{h_x}^i$ are defined in~\eqref{eqn:model_bounds}, and $L_{K_i}$, $\nabla_x L_{K_i}$, and $\nabla_\xi L_{K_i}$ are defined in Assumption~\ref{assmp:GP}.

Then
\begin{align*}
    \Pr \left\{ \norm{h(z) - \nu_N(z)} \leq \Delta h (z,\tau), \quad \forall z \in \mathcal{Z}   \right\} \geq 1 - \delta,&\\
    \Pr \left\{ \norm{\nabla_x h(z) - \nabla_x \nu_N(z)} \leq \nabla_x \Delta h (z,\tau), \quad \forall z \in \mathcal{Z}   \right\} \geq 1 - \delta,&\\
    \Pr \left\{ \norm{\nabla_\xi h(z) - \nabla_\xi \nu_N(z)} \leq \nabla_\xi \Delta h (z,\tau), \quad \forall z \in \mathcal{Z}   \right\} \geq 1 - \delta,&
\end{align*} where $\nu_N(z) = \nu_N(\xi,x)$ is presented in Definition~\ref{def:uniform_bounds}.
\end{theorem}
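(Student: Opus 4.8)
The plan is to follow the covering-number/discretization strategy of~\cite[Thm.~3.1]{lederer2019uniform}, adapted so that the approximand is the posterior mean $\nu_N$ (instead of the zero prior mean) and extended to the first partial derivatives. The argument rests on three ingredients: (i) a pointwise Gaussian tail bound which, after a union bound over the $m$ output components and the finitely many points of a minimal $\tau$-cover $\mathcal{Z}_\tau$ of $\mathcal{Z}$, controls the error on the grid; (ii) explicit Lipschitz constants, and moduli of continuity, for $h$, for $\nu_N$, and for the posterior standard deviation $\sigma_N$ on $\mathcal{Z}$; and (iii) a triangle-inequality step that transports the grid bound to an arbitrary $z\in\mathcal{Z}$ by passing to its nearest grid point, at a cost proportional to $\tau$.

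For the function-value bound, fix $z'\in\mathcal{Z}_\tau$. By~\eqref{eqn:function:posterior} each component satisfies $[h]_i(z')-\nu_{i,N}(z')\sim\mathcal{N}(0,\sigma_{i,N}^2(z'))$, so a standard Gaussian tail estimate gives $\abs{[h]_i(z')-\nu_{i,N}(z')}\le\sqrt{\beta(\tau)}\,\sigma_{i,N}(z')$ with probability at least $1-e^{-\beta(\tau)/2}$; choosing $\mathcal{Z}_\tau$ to realize the $\tau$-covering number, so $\abs{\mathcal{Z}_\tau}=M(\tau,\mathcal{Z})$, and union bounding over the $mM(\tau,\mathcal{Z})$ resulting events shows that $\norm{h(z')-\nu_N(z')}\le\sqrt{\beta(\tau)}\,\norm{\sigma_N(z')}$ simultaneously for all $z'\in\mathcal{Z}_\tau$ with probability at least $1-\delta$, by the choice $\beta(\tau)=2\log(mM(\tau,\mathcal{Z})/\delta)$. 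Next I would establish the Lipschitz estimates: from Assumption~\ref{assmp:models}, $\norm{h(z)-h(z')}\le(\Delta_{h_x}+\Delta_{h_\xi})\norm{z-z'}$; writing $\nu_{i,N}(z)=K_i(z,\mathbf{Z})^\top\alpha_i$ with $\alpha_i=[K_i(\mathbf{Z},\mathbf{Z})+\sigma^2\mathbb{I}_N]^{-1}([\mathbf{Y}]_{i,\cdot})^\top$ and using $\norm{\nabla_z K_i(z,\mathbf{Z})}\le\sqrt{N}\,L_{K_i}$ from Assumption~\ref{assmp:GP} together with Cauchy--Schwarz over the $N$ data points yields the stated $L_{\nu_N}$; and the modulus of continuity $\omega_N$ of $\sigma_N$ follows from $\abs{\sigma_{i,N}(z)-\sigma_{i,N}(z')}\le\sqrt{\abs{\sigma_{i,N}^2(z)-\sigma_{i,N}^2(z')}}$ combined with a bound on the variation of $\sigma_{i,N}^2$ in terms of $L_{K_i}$ and $\norm{[K_i(\mathbf{Z},\mathbf{Z})+\sigma^2\mathbb{I}_N]^{-1}}$. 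Finally, for arbitrary $z\in\mathcal{Z}$ pick $z'\in\mathcal{Z}_\tau$ with $\norm{z-z'}\le\tau$ and combine, $\norm{h(z)-\nu_N(z)}\le\norm{h(z)-h(z')}+\norm{h(z')-\nu_N(z')}+\norm{\nu_N(z)-\nu_N(z')}\le\sqrt{\beta(\tau)}\norm{\sigma_N(z')}+(\Delta_{h_x}+\Delta_{h_\xi}+L_{\nu_N})\tau$, and then replace $\norm{\sigma_N(z')}$ by $\norm{\sigma_N(z)}+\omega_N(\tau)$; collecting the $\tau$-dependent terms reproduces $\gamma(\tau)$ and hence the claimed bound $\Delta h(z,\tau)$, on the same probability-$(1-\delta)$ event.

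For the derivative bounds I would rerun the template on the derivative processes. Since each $K_i$ is twice continuously differentiable (Assumption~\ref{assmp:GP}), differentiation commutes with conditioning, so $\nabla_\xi[h]_i$ and $\nabla_x[h]_i$ are GPs with posteriors given by~\eqref{eqn:function_derivative:posterior} and Gaussian marginals~\eqref{eqn:function_derivative:individual_posterior}. Fixing $i$, a Gaussian tail bound on each scalar marginal $[\nabla_\xi h]_{i,k}(z')-[\nabla_\xi\nu_{i,N}]_k(z')\sim\mathcal{N}(0,[\nabla_\xi\sigma_{i,N}^2]_{k,k}(z'))$, union bounded over $k\in\{1,\dots,l\}$ and over $\mathcal{Z}_\tau$ at confidence budget $\hat{\delta}$, controls $\norm{\nabla_\xi[h]_i-\nabla_\xi\nu_{i,N}}$ on the grid through $\beta_\xi(\tau)$; the Lipschitz constant of $\nabla_\xi[h]_i$ comes from the Hessian bounds of Assumption~\ref{assmp:models}, that of $\nabla_\xi\nu_{i,N}$ from $\norm{\nabla_\xi^2 K_i}\le\nabla_\xi L_{K_i}$ (giving $\nabla_\xi L_{i,\nu_N}$), and the modulus $\nabla_\xi\omega_{i,N}$ from the second-derivative kernel bounds; transporting to arbitrary $z$ as before yields $\norm{\nabla_\xi[h]_i(z)-\nabla_\xi\nu_{i,N}(z)}\le\nabla_\xi\gamma_i(\tau)+\sqrt{\beta_\xi(\tau)}\norm{\Sigma_{\xi}^{i,N}(z)}$ with probability at least $1-\hat{\delta}$. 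Because the $[h]_i$ are independent (Assumption~\ref{assmp:GP}) and $(1-\hat{\delta})^m=1-\delta$, intersecting over $i\in\{1,\dots,m\}$ and taking the Euclidean combination $\sqrt{\sum_{i=1}^m(\cdot)^2}$ gives $\norm{\nabla_\xi h(z)-\nabla_\xi\nu_N(z)}\le\nabla_\xi\Delta h(z,\tau)$ with probability at least $1-\delta$; the $\nabla_x$ statement is identical with $l$ replaced by $n$ and $\beta_x$ in place of $\beta_\xi$.

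I expect the main obstacle to be the bookkeeping for the explicit data-dependent Lipschitz constants and moduli of continuity---$L_{\nu_N}$, $\omega_N$, $\nabla_\xi L_{i,\nu_N}$, $\nabla_x L_{i,\nu_N}$, $\nabla_\xi\omega_{i,N}$, $\nabla_x\omega_{i,N}$---each of which requires combining a Cauchy--Schwarz step over the $N$ regressors with operator-norm estimates on $[K_i(\mathbf{Z},\mathbf{Z})+\sigma^2\mathbb{I}_N]^{-1}$ and the kernel gradient/Hessian constants of Assumption~\ref{assmp:GP}, with the standard-deviation moduli additionally relying on the concavity inequality $\abs{\sqrt{a}-\sqrt{b}}\le\sqrt{\abs{a-b}}$ applied to the posterior variances. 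A secondary point needing care is the probabilistic accounting: the function-value bound uses a plain union bound at level $\delta$, whereas the derivative bounds split the budget as $\hat{\delta}$ per output function and recombine via independence, so one must verify $(1-\hat{\delta})^m=1-\delta$ and that the per-$i$ union bounds consume only the $\hat{\delta}$ allotment. Everything else is a direct transcription of the pointwise-to-uniform scheme of~\cite{lederer2019uniform}.
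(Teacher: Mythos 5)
Your proposal is correct and mirrors the paper's own proof: the same three ingredients (Gaussian tail plus union bound on a $\tau$-cover, explicit Lipschitz constants for $h$ and $\nu_N$ with moduli of continuity for the posterior standard deviations via $\abs{\sqrt{a}-\sqrt{b}}\le\sqrt{\abs{a-b}}$, and a triangle-inequality transport from the grid to arbitrary $z$) appear as Lemmas~\ref{lem:finite_cardinality}--\ref{lem:modulus} and the concluding argument in Appendix~\ref{app:uniform_bounds}, including the per-component budget $\hat{\delta}$ with $(1-\hat{\delta})^m=1-\delta$ for the derivative bounds. No substantive differences to report.
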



Using the high-probability uniform bounds in Theorem~\ref{thm:uniform_bounds}, let us define
\begin{subequations}\label{eqn:GP_bounds_new}
\begin{align}
    \Delta_{\hat{h}} &= \sup_{z \in \mathcal{X}_\xi \times \mathcal{X}} \Delta h (z,\tau), \\
    \Delta_{\hat{h}_x} &= \sup_{z \in \mathcal{X}_\xi \times \mathcal{X}} \nabla_x \Delta h (z,\tau), \\
    \Delta_{\hat{h}_\xi} &= \sup_{z \in \mathcal{X}_\xi \times \mathcal{X}} \nabla_\xi \Delta h (z,\tau).
\end{align}
\end{subequations} Then, using Theorem~\ref{thm:uniform_bounds}, we conclude that, with probability at least $1 - \delta$
\begin{equation}\label{eqn:GP_bounds}
\Delta_{\hat{h}} \geq \norm{h(\xi,x) - \nu_N(\xi,x)},~ \Delta_{\hat{h}_x} \geq \norm{\nabla_x (h(\xi,x) - \nu_N(\xi,x))}, ~\Delta_{\hat{h}_\xi} \geq   \norm{\nabla_\xi (h(\xi,x) - \nu_N(\xi,x))},
\end{equation}
for all $(\xi,x) \in \mathcal{X}_\xi \times \mathcal{X}$, where $\nu_N(\xi,x)$ is defined in~\eqref{eqn:learned_dynamics}.

We now proceed with the design of the \rellone-\gp control. As aforementioned, the \rellone-\gp input is composed as
\begin{equation}\label{eqn:RL1GP_input}
u(t) = \hat{u}_{c,\hat{F}}(t) + \hat{u}_{a,\hat{F}}(t).
\end{equation}
Similar to $u_c(t)$ in~\eqref{eqn:CCM_input}, $\hat{u}_{c,\hat{F}}(t)$ is defined as
\begin{equation}\label{eqn:RL1GP:CCM_input}
    u_{c,\hat{F}} = u_d(t) + k_{c,\hat{F}}(x_d(t),x(t)),
\end{equation} where $k_{c,\hat{F}}$ is defined via the analytic solution of the following QP:
\begin{subequations}\label{eqn:RL1GP:CCM_optimization}
\begin{align}
    &k_{c,\hat{F}}(x_d(t),x(t))= \argmin_{k\in\mathbb{R}^m} \norm{k}^2,\\ &\text{s.t.}~2\bar{\gamma}_{\bar{s}}^{\top}(1,t)M(x(t))\dot{x}_k(t)-2\bar{\gamma}_{\bar{s}}^{\top}(0,t)M(x_d(t))\dot{x}_d(t)\leq -2\lambda\mathcal{E}(x_d(t),x(t)),
\end{align}
\end{subequations}
where now $\dot{x}_k(t) = \hat{F}(\xi(t),x(t),u_d(t) + k)$ is compared to $\dot{x}_k(t) = \bar{F}(x(t),u_d(t) + k)$ as in~\eqref{eqn:CCM_optimization}. The incorporation of the learned mean function $\nu_N(\xi,x)$ into the contraction theoretic input is possible due to the fact that the CCM $M(x)$ synthesized for the nominal dynamics $\bar{F}$ satisfying the conditions in~\eqref{eqn:CCM_conditions} is also a valid CCM for both the uncertain dynamics $F$ and the learned dynamics $\hat{F}$~\cite[Lemma~1]{lopez2020adaptive}. The implication is that the CCM $M(x)$ does not need to be re-synthesized whenever the model is updated. To be precise, this property of the CCM $M(x)$ holds because the condition in~\eqref{eqn:killing} implies that the vectors $[b]_{\cdot,j}(x)$ form a Killing vector field for $M(x)$.

The $\mathcal{L}_1$ adaptive input $\hat{u}_{a,\hat{F}}(t)$ once again consists of the state-predictor, adaptation law, and the control law. However, as aforementioned, the input $\hat{u}_{a,\hat{F}}(t)$ now compensates for the remainder uncertainty $h - \nu_N$. In order to re-define the $\hat{u}_{a,\hat{F}}(t)$ control input, we use the constants $\Delta_{\hat{h}}$, $\Delta_{\hat{h}_x}$, and $\Delta_{\hat{h}_\xi}$ defined in~\eqref{eqn:GP_bounds_new}.
The \rellone-\gp \textit{state-predictor} is redesigned as
\begin{equation}\label{eqn:RL1GP:predictor}
    \dot{\hat{x}}(t) = \hat{F}(\xi(t),x(t),u_{c,\hat{F}}(t) + u_{a,\hat{F}}(t) + \hat{\mu}(t)) + A_m \tilde{x}(t), \quad \hat{x}(0) = x_0,
\end{equation} where the learned dynamics $\hat{F}$ are presented in~\eqref{eqn:learned_dynamics}.  Here, $A_m \in \mathbb{R}^{n \times n}$ is any Hurwitz matrix as defined in~\eqref{eqn:state-predictor}.
The inclusion of learned models within the predictor of the $\mathcal{L}_1$ architecture has been previously explored in~\cite{gahlawat2020l1}.
The \textit{adaptation law} is similarly redesigned as
\begin{equation}\label{eqn:RL1GP:adaptation_law}
\dot{\hat{\mu}}(t) = \Gamma \text{Proj}_{\hat{\mathcal{H}}} \left(\hat{\mu}(t), -B^\top (x) P \tilde{x}(t)   \right), \quad \hat{\mu}(0) \in \hat{\mathcal{H}},
\end{equation}
where, as in~\eqref{eqn:adaptation_law}, $\Gamma > 0$ is the adaptation rate and $\tilde{x}(t) = \hat{x}(t) - x(t)$, but now the projection operator is defined on the set $\hat{\mathcal{H}} = \{y \in \mathbb{R}^m~|~\norm{y} \leq \Delta_{\hat{h}}\}$ with $\Delta_{\hat{h}}$ defined in~\eqref{eqn:GP_bounds_new}, instead of $\mathcal{H} = \{y \in \mathbb{R}^m~|~\norm{y} \leq \Delta_h\}$.
Finally, the \textit{control law} is defined as
\begin{equation}\label{eqn:RL1GP:control_law}
u_a(s) = -C(s)\hat{\mu}(s),
\end{equation} where, as in~\eqref{eqn:control_law}, $C(s)$ is a low-pass filter with bandwidth $\omega$ and satisfies $C(0) = \mathbb{I}_m$.

Analogously to conditions in~\eqref{eqn:RL1_conditions} for the filter bandwidth $\omega$ and adaptation rate $\Gamma$, we need to redefine these conditions for the learned representation of the uncertain dynamics in~\eqref{eqn:learned_dynamics}. We begin by defining a few constants analogous to the ones presented in~\eqref{eqn:RL1_bounds}. Consider the positive scalars $\rho_r$ and $\rho_a$ presented in~\eqref{eqn:tube_def}, using which we define
\begin{subequations}\label{eqn:RL1GP_bounds}
\begin{align}
\label{eqn:RL1GP:bounds:M_x}
\Delta_{M_x} &:= \sup_{x \in \mathcal{O}_{x_d}(\rho)} \sum_{i=1}^n \norm{\pdv{M}{x_i}(x)}, \\
\label{eqn:RL1GP:bounds:psi_x}
\Delta_{\Psi_x} &:= 2\Delta_{B_x} + \frac{\Delta_B \Delta_{M_x}}{\ualpha},\\
\label{eqn:RL1GP:bounds:delta_u}
\Delta_{\delta_u} &:= \frac{1}{2}\sup_{x \in \mathcal{O}_{x_d}(\rho)}\left(\frac{\olambda(L^{-\top}(x)Z(x)L^{-1}(x))}{\usigma_{> 0}(B^\top(x)L^{-1}(x))}\right),\\
\Delta_{\hat{f}} &:= \Delta_f + \sup_{(\xi,x) \in \mathcal{X}_\xi \times \mathcal{O}_{x_d}} \norm{B(x)\nu_N(\xi,x)},\\
 \label{eqn:RL1GP:bounds:dx_r}
\Delta_{\dot{x}_r} &:= \Delta_{\hat{f}} + \Delta_B(\norm{\mathbb{I}_m - C(s)}_{\mathcal{L}_1}\Delta_{\hat{h}} + \Delta_{u_d} + \rho\Delta_{\delta_u}), \\
\label{eqn:RL1GP:bounds:dx}
\Delta_{\dot{x}} &:= \Delta_{\hat{f}} + \Delta_B(2\Delta_{\hat{h}} + \Delta_{u_d} + \rho\Delta_{\delta_u}),\\
\label{eqn:RL1GP:bounds:tx}
\Delta_{\tilde{x}} &:= \sqrt{\frac{4\olambda(P)\Delta_{\hat{h}}(\Delta_{\hat{h}_\xi} + \Delta_{\hat{h}_x}\Delta_{\dot{x}})}{\ulambda(P)\underline{\lambda}(Q)} + \frac{4\Delta_{\hat{h}}^2}{\underline{\lambda}(P)}}, \\
\label{eqn:RL1GP:bounds:teta}
\Delta_{\tilde{\eta}} &:= \left(\Delta_{B^\dagger_x} \Delta_{\dot{x}} + (\norm{sC(s)}_{\mathcal{L}_1} + \norm{A_m}) \Delta_{B^\dagger} \right)\Delta_{\tilde{x}},\\
\label{eqn:RL1GP:bounds:theta}
\Delta_\theta &:= \frac{\Delta_B \overline{\alpha}  \Delta_{\tilde{\eta}}}{\lambda},\\
\label{eqn:RL1GP:bounds:dpsi}
\Delta_{\dot{\Psi}} &:= \oalpha\left(
    \Delta_B \Delta_{\dot{\ogamma}_{\bar{s}}} + \frac{\Delta_B \Delta_{M_x} \Delta_{\dot{x}}}{\sqrt{\oalpha\ualpha}} + \Delta_{B_x} \Delta_{\dot{x}}\right), \\
\label{eqn:RL1GP:bounds:dgamma_s}
\Delta_{\dot{\ogamma}_s} &:= \sqrt{\frac{\oalpha}{\ualpha}}\left(\Delta_{\hat{f}_x} + (\Delta_{\hat{h}} + \Delta_{u_d
} + \rho \Delta_{\delta_u})\Delta_{b_x} + \left(\Delta_{\hat{h}_x} + \frac{\sqrt{\ualpha}\Delta_{\delta_u}}{\sqrt{\oalpha}}\right) \Delta_B \right),\\
\Delta_{\hat{f}_x} &:= \Delta_{f_x} + \sup_{(\xi,x) \in \mathcal{X}_\xi \times \mathcal{O}_{x_d}} \norm{\frac{\partial B(x)\nu_N(\xi,x)}{\partial x}},
\end{align}
\end{subequations}
where, as in~\eqref{eqn:RL1_bounds}, $\mathcal{O}_{x_d}(\rho)$ is defined in~\eqref{eqn:tube}; $\Delta_{u_d}$, $\Delta_f$, $\Delta_{f_x}$, $\Delta_B$, $\Delta_{B_x}$, $\Delta_{b_x}$, $\Delta_h$, $\Delta_{h_\xi}$, $\Delta_{h_x}$, $\Delta_{B^\dagger}$ and $\Delta_{B_x^\dagger}$ are defined in~\eqref{eqn:model_bounds}; $\oalpha$ and $\ualpha$ are defined in \cref{assmp:CCM}; and $Z(x)$ is defined as
\[
Z(x):=-\partial_{f} W(x)+ 2\Sym{\pdv{f}{x}(x)W(x)} + 2 \lambda W(x),
\]
where $W(x) = M(x)^{-1}$ is referred to as the dual metric and $L(x)^\top L(x) = W(x)$, and these entities are guaranteed to exist due to the positive definiteness of the CCM $M(x)$. Additionally, $\Delta_{\hat{h}}$, $\Delta_{\hat{h}_x}$, and $\Delta_{\hat{h}_\xi}$ are defined in~\eqref{eqn:GP_bounds_new}. We would like to highlight the fact that the constants in~\eqref{eqn:RL1GP_bounds} differ from the ones in~\eqref{eqn:RL1_bounds} in that $\Delta_h$, $\Delta_{h_x}$, $\Delta_{h_\xi}$, $\Delta_f$, and $\Delta_{f_x}$ have been replaced by $\Delta_{\hat{h}}$, $\Delta_{\hat{h}_x}$, $\Delta_{\hat{h}_\xi}$, $\Delta_{\hat{f}}$, and $\Delta_{\hat{f}_x}$, respectively. Also note that the $\Delta_{\delta_u}$ in~\eqref{eqn:RL1GP:bounds:delta_u} remains the same as in~\eqref{eqn:bounds:delta_u} due to the Killing vector field condition in~\eqref{eqn:killing}.

Analogously to conditions in~\eqref{eqn:RL1_conditions} for the filter bandwidth $\omega$ and adaptation rate $\Gamma$, we need to redefine these conditions for the learned representation of the uncertain dynamics in~\eqref{eqn:learned_dynamics}. To construct the safety and performance certificates based on the learned estimates, using~\eqref{eqn:GP_bounds_new} and~\eqref{eqn:RL1GP_bounds}, we redefine the conditions that the filter bandwidth $\omega$ and adaptation rate $\Gamma$ must satisfy
\begin{equation}\label{eqn:RL1GP_conditions}
    \rho_r^2 \geq \frac{\mathcal{E}(x_d(0),x_0)}{\underline{\alpha}} + \zeta_1(\hat{\kappa}_1,\omega), ~\underline{\alpha} > \zeta_2(\hat{\kappa}_2,\omega) + \zeta_3(\hat{\kappa}_3,\omega), ~ \sqrt{\Gamma} > \frac{\hat{\kappa}_4(\Delta_{\hat{h}},\Delta_{\hat{h}_x},\Delta_{\hat{h}_\xi})}{\rho_a ( \underline{\alpha} - \zeta_2(\hat{\kappa}_2,\omega) - \zeta_3(\hat{\kappa}_3,\omega)   )},
\end{equation}
where
\begin{align}\label{eqn:RL1GP_constants}
    \zeta_1(\hat{\kappa}_1,\omega) =& \frac{\hat{\kappa}_1(\Delta_{\hat{h}},\Delta_{\hat{h}_x},\Delta_{\hat{h}_\xi})}{\omega},~ \zeta_2(\hat{\kappa}_2,\omega) = \frac{\hat{\kappa}_2(\Delta_{\hat{h}},\Delta_{\hat{h}_x},\Delta_{\hat{h}_\xi})}{\omega},~ \zeta_3(\hat{\kappa}_3,\omega) = \frac{\hat{\kappa}_3(\Delta_{\hat{h}},\Delta_{\hat{h}_x})}{\omega},
\end{align}
and where
\begin{subequations}\label{eqn:RL1GP:kappas}
\begin{align}
    \kappa_1(\Delta_{\hat{h}},\Delta_{\hat{h}_x},\Delta_{\hat{h}_\xi}) =& 2 \rho \Delta_B \frac{\bar{\alpha}}{\underline{\alpha}} \left(
    \frac{\Delta_{\hat{h}}}{|2\lambda/\omega - 1|} + \frac{\Delta_{\hat{h}_\xi} + \Delta_{\hat{h}_x} \Delta_{\dot{x}_r}}{2 \lambda},
    \right), \\
    \kappa_2(\Delta_{\hat{h}},\Delta_{\hat{h}_x},\Delta_{\hat{h}_\xi}) =& \bar{\alpha} \Delta_{\Psi_x} \frac{\bar{\alpha}}{\underline{\alpha}} \left(
    \frac{\Delta_{\hat{h}}}{|2\lambda/\omega - 1|} + \frac{\Delta_{\hat{h}_\xi} + \Delta_{\hat{h}_x} \Delta_{\dot{x}_r}}{2 \lambda}
    \right),\\
    \kappa_3(\Delta_{\hat{h}},\Delta_{\hat{h}_x}) =& \bar{\alpha} \Delta_{\hat{h}_x} \left(
    \frac{4 \lambda \Delta_B + \Delta_{\dot{\Psi}}}{\lambda}
    \right),\\
    \kappa_4(\Delta_{\hat{h}},\Delta_{\hat{h}_x},\Delta_{\hat{h}_\xi}) =& \Delta_\theta.
\end{align}
\end{subequations}

Note that these conditions for the \rellone-\gp differ from the conditions for the \rellone control in~\eqref{eqn:RL1_conditions} in that these are defined using the constants $\hat{\kappa}_i$, $i \in \{1,\dots,4\}$, which are in-turn defined using $\Delta_{\hat{h}}$, $\Delta_{\hat{h}_x}$, and $\Delta_{\hat{h}_\xi}$ presented in~\eqref{eqn:GP_bounds_new}. In contrast, the conditions for the \rellone control in~\eqref{eqn:RL1_conditions} are defined using the constants $\kappa_i$, $i \in \{1,\dots,4\}$, that are in turn defined using $\Delta_{h}$, $\Delta_{h_x}$, and $\Delta_{h_\xi}$, which are the known conservative bounds for the uncertainty as presented in Assumptions~\ref{assmp:models:functions} and~\ref{assmp:models:functions_derivatives}. Thus, the only difference between $\hat{\kappa}_i$ and $\kappa_i$, $i \in \{1,\dots,4\}$, is the use of the constants in~\eqref{eqn:GP_bounds} as opposed to the bounds in Assumptions~\ref{assmp:models:functions} and~\ref{assmp:models:functions_derivatives}.
In conclusion, given the posterior distribution in~\eqref{eqn:function:posterior}, the \rellone-\gp control input is defined via~\eqref{eqn:RL1GP_input},~\eqref{eqn:RL1GP:CCM_input},~\eqref{eqn:RL1GP:predictor},~\eqref{eqn:RL1GP:adaptation_law} and~\eqref{eqn:RL1GP:control_law}.


We now state the main result.
\begin{theorem}\label{thm:RL1GP_main}
Let Assumptions~\ref{assmp:models:functions},~\ref{assmp:models:functions_derivatives},~\ref{assmp:GP} and~\ref{assmp:CCM} hold and suppose the uniform bounds in Theorem~\ref{thm:uniform_bounds} are computed for some $\delta \in (0,1)$ and $\tau > 0$ using the posterior distributions in~\eqref{eqn:function:posterior}-\eqref{eqn:function_derivative:posterior}. Furthermore, let the filter bandwidth $\omega$ and rate of adaptation $\Gamma$ satisfy~\eqref{eqn:RL1GP_conditions}, using $\zeta_i(\hat{\kappa}_i,\omega)$, $i \in \{1,\dots,3\}$, presented in~\eqref{eqn:RL1GP_constants}, and defined using the constants in~\eqref{eqn:GP_bounds_new}.

Given any desired state-input pair $(x_d(t),u_d(t))$ satisfying the learned deterministic dynamics
\[
\dot{x}_d(t) = \hat{F}(\xi(t),x_d(t),u_d(t)),
\]
the state of the actual uncertain system
\[
\dot{x}(t) = F(\xi(t),x(t),u(t))
\] with \rellone-\gp control input~\eqref{eqn:RL1GP_input},~\eqref{eqn:RL1GP:CCM_input},~\eqref{eqn:RL1GP:predictor},~\eqref{eqn:RL1GP:adaptation_law} and~\eqref{eqn:RL1GP:control_law}, satisfies, with probability at least $1- \delta$
\begin{equation} \label{eqn:main_theorem:UB}
x(t) \in \Omega(\rho,x_d(t)) \subset \mathcal{O}_{x_d}(\rho), \quad \forall t \geq 0,
\end{equation}
with $\rho$ defined in~\eqref{eqn:tube_def}. Furthermore, the actual system state $x(t)$ is uniformly ultimately bounded, with probability at least $1- \delta$, as
\begin{equation}\label{eqn:main_theorem:UUB}
x(t) \in \Omega(\hat{\delta}(\omega,T),x_d(t)) \subset \Omega(\rho,x_d(t)), \quad  \forall t \geq T > 0,
\end{equation}
where the uniform ultimate bound (UUB) is given by
\[
\hat{\delta}(\omega,T) = \bar{\mu}(\omega,T) + \rho_a, \quad \bar{\mu}(\omega,T) = \sqrt{e^{-2\lambda T}\mathcal{E}(x_d(0),x_0)/ \underline{\alpha}  + \zeta_1(\hat{\kappa}_1,\omega)}.
\]
\end{theorem}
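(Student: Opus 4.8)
The plan is to reduce \Cref{thm:RL1GP_main} to \Cref{thm:RL1} (i.e., \cite[Thm.~5.1]{lakshmanan2020safe}) by an appropriate ``change of nominal model'' together with a probabilistic wrapper coming from \Cref{thm:uniform_bounds}. First I would observe that the decomposition in~\eqref{eqn:learned_dynamics} casts the actual dynamics in exactly the same affine-in-control form as~\eqref{eqn:system_dynamics}, but now with $\hat F$ playing the role of the nominal dynamics $\bar F$ and the \emph{remainder uncertainty} $\hat h := h - \nu_N$ playing the role of $h$. Since $\nu_N(\xi,x)$ enters only through the control channel (it is multiplied by $B(x)$), the matched structure is preserved: $\hat F(\xi,x,u) = f(x) + B(x)(u+\nu_N(\xi,x))$ has the same $f$ and $B$. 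The key structural fact I would invoke next is that the CCM $M(x)$ synthesized for $\bar F$ remains a valid CCM for $\hat F$: conditions~\eqref{eqn:CCM_conditions}(a) and~\eqref{eqn:killing} do not involve $f$ at all, and the strong-Killing condition~\eqref{eqn:killing} means the $[b]_{\cdot,j}$ are Killing fields, so adding $B(x)\nu_N(\xi,x)$ to the drift does not affect the geodesic-orthogonal contraction inequality~\eqref{eqn:CCM_conditions}(c) — this is \cite[Lemma~1]{lopez2020adaptive}. Hence \Cref{assmp:CCM} holds verbatim for $\hat F$ with the \emph{same} $\lambda, \ualpha, \oalpha$.

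Second I would verify that \Cref{assmp:models} — the bounded-derivative/bounded-uncertainty package — holds for the pair $(\hat F, \hat h)$ on the tube $\mathcal{O}_{x_d}(\rho)$. The bounds $\Delta_f, \Delta_{f_x}, \Delta_B, \Delta_{B_x}, \Delta_{b_x}, \Delta_{B^\dagger}, \Delta_{B_x^\dagger}, \Delta_{u_d}$ are unchanged since $f$, $B$, and the planner's $u_d$ are the relevant objects; $\nu_N$ is smooth and deterministic, so the planner using $\hat F$ produces a legitimate $(x_d,u_d)$. The only genuinely new ingredient is that the uncertainty bounds $\Delta_h, \Delta_{h_x}, \Delta_{h_\xi}$ must be replaced by valid bounds on $\hat h = h-\nu_N$ and its gradients. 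This is exactly what \Cref{thm:uniform_bounds} supplies: on the event $\mathcal{A}$ (of probability $\ge 1-\delta$ after a union bound over the three statements, or, more carefully, the single event on which all three hold simultaneously — here I would be careful and note that the three probabilistic statements may need to be intersected, so I would either assume they were designed to share a common high-probability event or pay an extra factor in $\delta$), the quantities $\Delta_{\hat h}, \Delta_{\hat h_x}, \Delta_{\hat h_\xi}$ defined in~\eqref{eqn:GP_bounds_new} are honest uniform upper bounds on $\|\hat h\|$, $\|\nabla_x\hat h\|$, $\|\nabla_\xi\hat h\|$ over $\mathcal Z \supset \mathcal{O}_{x_d}(\rho)$. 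The Hessian-type bounds in \Cref{assmp:models} transfer because $\nabla^2_\xi[\nu_N]_i$ and $\nabla^2_x[\nu_N]_i$ are bounded (the kernels are twice continuously differentiable, \Cref{assmp:GP}), so one can absorb them into the conservative constants; I would simply note these are bounded and not chase the explicit constant.

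Third, with \Cref{assmp:CCM} and \Cref{assmp:models} verified for $(\hat F,\hat h)$ on the event $\mathcal A$, I would observe that the \rellone-\gp controller~\eqref{eqn:RL1GP_input}--\eqref{eqn:RL1GP:control_law} is \emph{literally} the \rellone controller of \Cref{thm:RL1} instantiated with nominal dynamics $\hat F$, uncertainty $\hat h$, the same CCM $M$, the same $A_m$, $P$, $Q$, filter $C(s)$, and projection set $\hat{\mathcal H}$ sized by $\Delta_{\hat h}$. The constants $\hat\kappa_i$ and $\zeta_i(\hat\kappa_i,\omega)$ in~\eqref{eqn:RL1GP_constants} are defined by the same formulas~\eqref{eqn:RL1:kappas}--\eqref{eqn:RL1_constants} with $(\Delta_h,\Delta_{h_x},\Delta_{h_\xi})$ replaced by $(\Delta_{\hat h},\Delta_{\hat h_x},\Delta_{\hat h_\xi})$, and the conditions~\eqref{eqn:RL1GP_conditions} are exactly~\eqref{eqn:RL1_conditions} in those terms. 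Therefore \Cref{thm:RL1}, applied on $\mathcal A$, yields $x(t)\in\Omega(\rho,x_d(t))$ for all $t\ge0$ and the UUB with $\hat\delta(\omega,T)=\bar\mu(\omega,T)+\rho_a$, $\bar\mu(\omega,T)=\sqrt{e^{-2\lambda T}\mathcal E(x_d(0),x_0)/\ualpha+\zeta_1(\hat\kappa_1,\omega)}$. Since $\Pr(\mathcal A)\ge1-\delta$, both~\eqref{eqn:main_theorem:UB} and~\eqref{eqn:main_theorem:UUB} hold with probability at least $1-\delta$, completing the proof.

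I expect the main obstacle to be bookkeeping rather than conceptual: (i) confirming that the event on which \emph{all three} bounds of \Cref{thm:uniform_bounds} hold simultaneously still has probability $\ge1-\delta$ (this is presumably why $\hat\delta = 1-(1-\delta)^{1/m}$ and the per-derivative $\beta_\xi,\beta_x$ use $\hat\delta$ — the per-component failure probabilities are allocated so the union over components and over the function/$\xi$-gradient/$x$-gradient events is controlled; I would spell out that allocation), and (ii) checking that every place where \cite[Thm.~5.1]{lakshmanan2020safe} used smoothness/boundedness of the nominal drift $\bar F$ is still satisfied by $\hat F = f + B\nu_N$ — in particular that $\nu_N$ inherits enough regularity from the kernel to make $F(x)$, $\Delta_{\dot x}$, $\Delta_{\dot{\ogamma}_s}$, $\Delta_{\dot\Psi}$, and $\Delta_{\tilde\eta}$ well-defined with the stated (hatted) constants. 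Both points are routine given \Cref{assmp:GP}, but they are the steps I would write out with care.
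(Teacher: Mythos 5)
Your proposal is correct and takes essentially the same route as the paper: condition on the high-probability event from \Cref{thm:uniform_bounds} under which the bounds in~\eqref{eqn:GP_bounds} (instantiated via~\eqref{eqn:GP_bounds_new}) hold, and then apply the deterministic $\mathcal{RL}_1$ guarantee of \cite[Thm.~5.1]{lakshmanan2020safe} with $\hat{F}$ playing the role of the nominal dynamics and the remainder uncertainty $h-\nu_N$ playing the role of $h$. If anything you are more explicit than the paper's own (very short) proof, which does not spell out the transfer of Assumptions~\ref{assmp:models} and~\ref{assmp:CCM} to $(\hat{F},\,h-\nu_N)$ or the intersection of the three probabilistic statements in \Cref{thm:uniform_bounds} — the union-bound bookkeeping you flag is a real subtlety that the paper glosses over as well.
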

\begin{proof}
Consider the closed-loop system given by~\eqref{eqn:system_dynamics}(\eqref{eqn:learned_dynamics})~\eqref{eqn:RL1GP_input},~\eqref{eqn:RL1GP:CCM_input},~\eqref{eqn:RL1GP:predictor},~\eqref{eqn:RL1GP:adaptation_law} and~\eqref{eqn:RL1GP:control_law}. Under the assumption that the bounds in~\eqref{eqn:GP_bounds} hold, using~\cite[Thm.~5.1]{lakshmanan2020safe}, it can be shown that the state $x(t)$ of the closed-loop system satisfies~\eqref{eqn:main_theorem:UB} and~\eqref{eqn:main_theorem:UUB}.
The proof is then concluded using the fact that by Theorem~\ref{thm:uniform_bounds}, the bounds in~\eqref{eqn:GP_bounds} (defined using~\eqref{eqn:GP_bounds_new}) hold with probability at least $1-\delta$.
\end{proof}

A few crucial comments are in order. As the learning improves, the constants in~\eqref{eqn:GP_bounds_new} decrease, and hence, the constants in~\eqref{eqn:RL1GP_constants} decrease. This fact implies that without changing the filter bandwidth $\omega$ and adaptation rate $\Gamma$, the UUB in Theorem~\ref{thm:RL1GP_main} decreases. The decrease in the UUB, and the lack of a requirement for the re-tuning of the control parameters, is due to the monotonic dependence of the constants $\zeta_i$ on $\hat{\kappa}_i$, $i \in \{1,\dots,3\}$. Furthermore, as aforementioned, the CCM $M(x)$ does not need to be re-synthesized as the model is updated using learning. Thus, without re-tuning the parameters of the \rellone-\gp control input, with the \rellone-\gp control designed using only Assumptions~\ref{assmp:models:functions} and~\ref{assmp:models:functions_derivatives}, the performance improves as a function of learning. We would also highlight the fact that in the absence of learning, the \rellone-\gp control degenerates into the $\mathcal{CL}_1$ control, presented in Section~\ref{sec:prelim}, while still providing apriori computable safety bounds.



\section{Simulation Results}

We provide an illustrative example of a 6-DOF planar quadrotor at different levels during the learning process using a variety of motion planners. The dynamics of the vehicle can be expressed in the following control-affine form following \cite{singh2019robust}:
\[
    \begin{bmatrix}
    p_x \\ p_z \\ \theta \\ v_x \\ v_z \\ \dot{\theta}
    \end{bmatrix} =
    \begin{bmatrix}
        v_x\cos(\theta) - v_z\sin(\theta) \\
        v_x\sin(\theta) + v_z\cos(\theta) \\
        \dot{\theta} \\
        v_z\dot{\theta} - g \cos(\theta) \\
        -v_x\dot{\theta} - g \sin(\theta) \\
        0
    \end{bmatrix} +
    \begin{bmatrix}
    0 & 0 \\
    0 & 0 \\
    0 & 0 \\
    0 & 0 \\
    1 & 0 \\
    0 & 1
    \end{bmatrix} \begin{bmatrix}
    u_F \\
    u_M
    \end{bmatrix},
\]
\begin{figure}[t]
    \centering
    \includegraphics[width=0.5\textwidth]{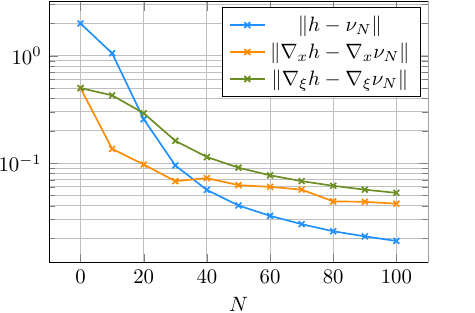}
    \caption{The decay of the uncertainty bounds based on the high-probability prediction-error bounds from \cref{eqn:GP_bounds_new} with the growth of the dataset.}
    \label{fig:sims_learning}
\end{figure}
where $p_x, p_z$ is the position of the quadrotor in the $x-z$ plane; $v_x, v_z$ are the velocities of the quadrotor in the body frame; $\theta, \dot{\theta}$ are the pitch angle and rate; $g$ is the gravitational constant; and $u_F$ and $u_M$ are the thrust and moment control inputs respectively. Additionally, the planar quadrotor is required to always meet the following state constraints:
\begin{subequations}\label{eqn:sims:constraints}
\begin{align*}
    -\frac{\pi}{4} \le & \theta \le \frac{\pi}{4}, \qquad -2 \le  v_x \le 2,  \\
    -\frac{\pi}{3} \le & \dot{\theta} \le \frac{\pi}{3}, \qquad -1 \le v_z \le 1.
\end{align*}
\end{subequations}
The contraction metric is synthesized using a sum-of-squares programming approach described in \citep{singh2019robust}. In the following examples, we consider that the unmodeled uncertainty is given by
\[
h(t, x) = \begin{bmatrix} -1 -0.1(v_x^2 + v_y^2) \\ 0.3\cos(t)\end{bmatrix}.
\]
The first component of the uncertainty affects the total thrust and is indicative of an off-trim control and drag-like parasitic force, whereas the second component is a time-varying disturbance that is injected into the moment input channel. Recall that the time-varying parameter from \cref{eqn:system_dynamics} is simply $\xi(t) = t$. In each of the examples we show the evolution of the safety guarantees across three learning episodes and the resulting improvement in performance and optimality.
The dataset is generated by using Latin hypercube sampling \cite{mackay1979comparison,urquhart2020surrogate} across the state space, but one could also use sophisticated exploration techniques to safely gather data based on our framework. Prior to learning, the bounds on the uncertainty and its growth over the state-space are conservatively estimated as
\begin{figure}[t]
\centering
\subfloat[][Episode 1: $\omega = 90 \ \mathrm{rad/s}, \Gamma = 7\mathrm{e}{10}$]{
\includegraphics[width=0.5\textwidth]{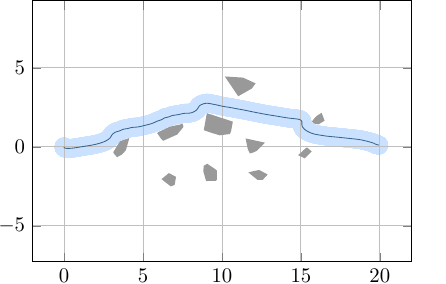}
\label{fig:forest_wide}}
\subfloat[][Episode 2: $\omega = 30 \ \mathrm{rad/s}, \Gamma = 2\mathrm{e}{6}$]{
\includegraphics[width=0.5\textwidth]{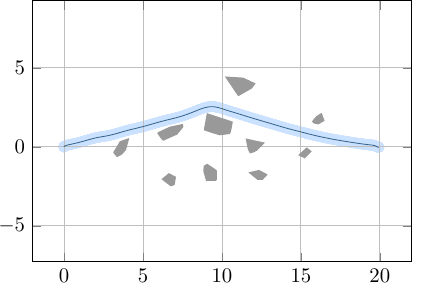}
\label{fig:forest_medium}}\\
\subfloat[][Episode 3: $\omega = 30 \ \mathrm{rad/s}, \Gamma = 2\mathrm{e}{6}$]{
\includegraphics[width=0.5\textwidth]{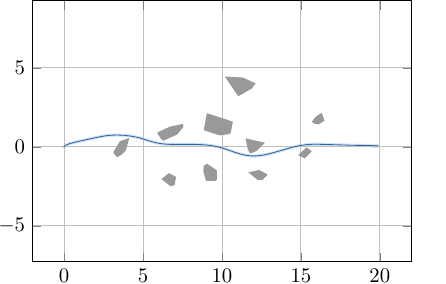}
\label{fig:forest_narrow}}
\caption{Planar quadrotor flight across an obstacle forest with (a) only a deterministic knowledge of the uncertainty, (b) model learned with $N=25$ dataset, (c) model learned with $N=100$ dataset.}
\label{fig:forest}
\end{figure}

\[
\Delta_h = 2.0, \quad \Delta_{h_x} = 0.5, \quad \Delta_{h_{\xi}} = 0.5.
\]
One can analytically verify that the true bounds of the uncertainty are indeed lower than our estimates:
\[
\norm{h(t,x)} = 1.53, \quad \norm{\pdv{h}{x}(t,x)} = 0.45, \quad \norm{\pdv{h}{t}(t,x)} = 0.3, \quad \forall x \in \mathcal{X} \textrm{ and } t \ge 0,
\]
where $\mathcal{X}$ is defined using the state contraint set in \cref{eqn:sims:constraints}.
In the first two examples, we begin with a simplified architecture of the \rellone-\gp input where the learned estimates are only used to improve performance and optimality with respect to the nominal model. That is, the planner does not incorporate the learned updates. Later in third example, we provide a sim for the complete \rellone-\gp architecture but in a simplified environment. We consider three instances/episodes during the learning transients with $N=0$, $N=25$ and $N=100$ samples. The corresponding uncertainty bounds for each episode are shown in \Cref{fig:sims_learning}. The results presented use $\delta = 0.1$ and $\tau = 1\mathrm{e}{-8}$ for the terms defined in \cref{thm:uniform_bounds}, therefore the performance bounds indicated in the figures hold with probability at least $0.9$. The examples were simulated using the Julia programming language \cite{bezanson2017julia} and the Pluto reactive environment \cite{fons2020pluto}.

\begin{figure}[t]
\centering
\subfloat[][Episode 1: $\omega = 90 \ \mathrm{rad/s}, \Gamma = 7\mathrm{e}{10}$]{
\includegraphics[width=0.5\textwidth]{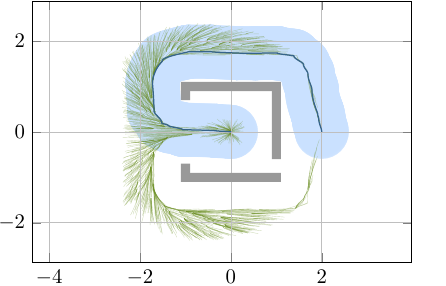}
\label{fig:trap_wide}}
\subfloat[][Episode 2: $\omega = 30 \ \mathrm{rad/s}, \Gamma = 2\mathrm{e}{6}$]{
\includegraphics[width=0.5\textwidth]{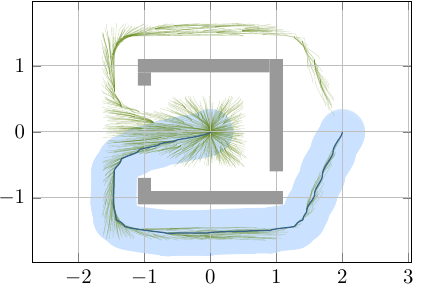}
\label{fig:trap_medium}}\\
\subfloat[][Episode 3: $\omega = 30 \ \mathrm{rad/s}, \Gamma = 2\mathrm{e}{6}$]{
\includegraphics[width=0.5\textwidth]{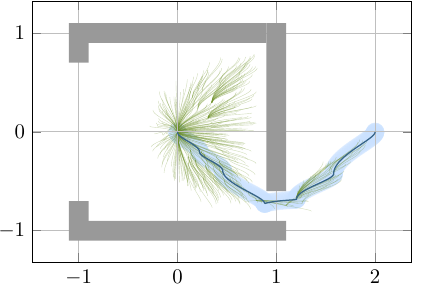}
\label{fig:trap_narrow}}
\caption{Planar quadrotor escaping a bug trap using (a) only a deterministic knowledge of the uncertainty, (b) model learned with $N=25$ dataset, (c) model learned with $N=100$ dataset. The green lines indicate the edges of the random geometric graph constructed by BIT*.}
\label{fig:trap}
\end{figure}

\begin{figure}
\centering
\subfloat[][Episode 1: $\omega = 90 \ \mathrm{rad/s}, \Gamma = 7\mathrm{e}{10}$]{
\includegraphics[width=\textwidth]{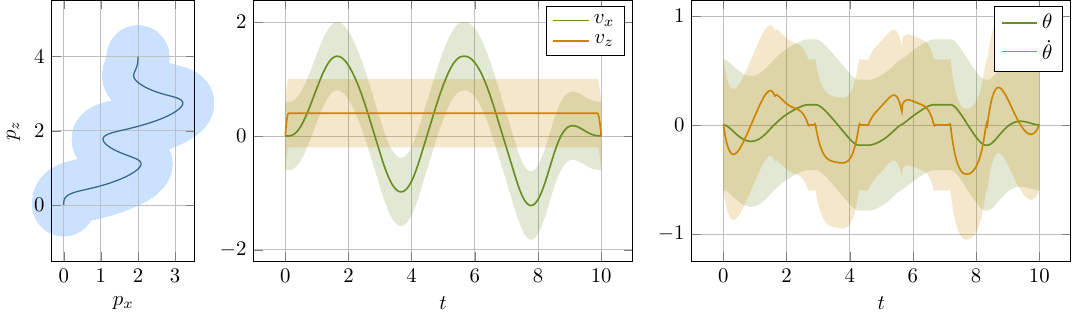}
\label{fig:optim_wide}} \\
\subfloat[][Episode 2: $\omega = 30 \ \mathrm{rad/s}, \Gamma = 2\mathrm{e}{6}$]{
\includegraphics[width=\textwidth]{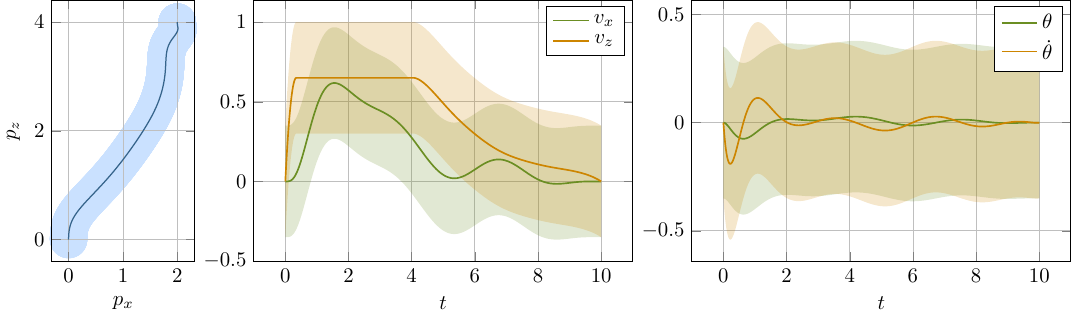}
\label{fig:optim_medium}} \\
\subfloat[][Episode 3: $\omega = 30 \ \mathrm{rad/s}, \Gamma = 2\mathrm{e}{6}$]{
\includegraphics[width=\textwidth]{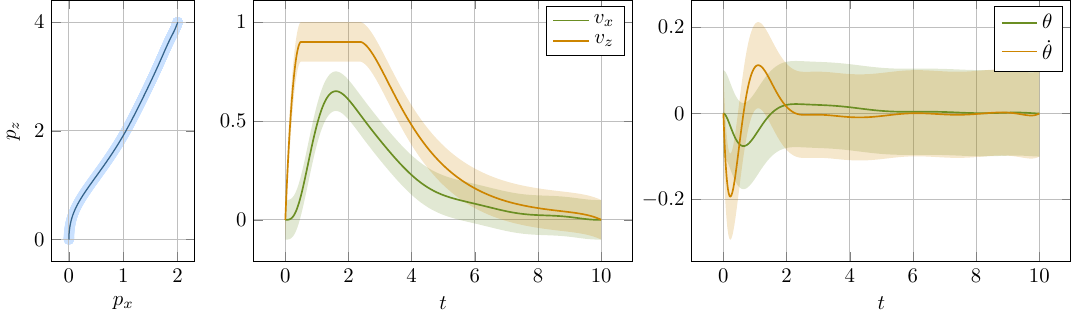}
\label{fig:optim_narrow}}
\caption{Planar quadrotor traveling from $(0,0)$ to $(2,4)$ using (a) only a deterministic knowledge of the uncertainty, (b) model learned with $N=25$ dataset, (c) model learned with $N=100$ dataset. On the left, the plots show the navigation and the performance bounds for the vehicle on the $x-z$ plane. The plots in the middle and the right show the state trajectories and their corresponding performance bounds.}
\label{fig:optim}
\end{figure}

\begin{example}[Obstacle Forest]
\label{ex:forest}
The quadrotor is tasked to safely fly across a forest of convex polygonal obstacles from the origin to position 20 meters away while minimizing the following discrete-time objective
\[
J = \sum_{k=0}^{T-1} x_k^\top Q x_k + (x_T - x_{\textrm{goal}})^\top Q_f (x_T - x_{\textrm{goal}}),
\]
where $x_k$ is the state at the $k^{\textrm{th}}$ time-instant, $x_\textrm{goal}$ is the goal state, and $Q$ and $Q_f$ are positive definite diagonal matrices. In this example, we use MPPI~\citep{williams2018information} to generate the feasible trajectories based on the pre-computed tube size.  MPPI was configured to generate 500 trajectory rollouts at a frequency of 50 Hz with a prediction horizon of 2 seconds. The tubes depicted in \Cref{fig:forest} are only a projection of the tube $\mathcal{O}_{x_d}$ onto the vehicle position but they also extend into the rest of the state-space limiting the overall maneuverability of the quadrotor. For instance when the tube-size is $\rho = 0.6$, the maximum pitch angle is  approximately $\pm$ 11 degrees instead of the full $\pm$ 45 degrees pitch that the contraction metric was initially designed for.  Initially in \Cref{fig:forest_wide}, the model knowledge is poor and the tubes guaranteed by $\mathcal{CL}_1$ control are conservative based only on the deterministic knowledge of the uncertainty, with a tube size of $\rho = 0.6$. This lack of knowledge results in a circuitous path that takes over about 27~seconds for the vehicle to safely traverse. As the data is incorporated into the model, the performance improvement can be seen in~\Cref{fig:forest}. The trajectory shown in~\Cref{fig:forest_medium} has a tube radius of $\rho = 0.35$  and has a duration of 16~seconds. In~\Cref{fig:forest_narrow}, the trajectory has a tube radius of $\rho = 0.1$, and the vehicle can navigate the environment to the final position in only 14~seconds. Note that after incorporating the learned model, both the $\mathcal{L}_1$ filter bandwidth and the adaptation rate are reduced to improve the robustness margin of the closed-loop system and lower the computational burden of the controller.
\end{example}

\begin{example}[Bug-Trap]
In this example, the quadrotor must safely escape a box trap from the origin and arrive at a point on the other side of the trap. For such problems, complete or probabilistically complete planners are the algorithms of choice since other methods typically get stuck at a local minimum and never reach the goal. We use the popular sampling-based planner BIT* \cite{gammell2015batch} with the two-point boundary value problem solved using ALTRO \cite{howell2019altro}. For the sake of simplicity, our implementation of BIT* only samples in the position space and the remaining states are assumed to be zero at each sample, but this can be relaxed if the planner is constructed following the approach described in \cite{xie2015toward}. BIT* is configured with a batch-size of 500 samples and a total of 10 batches. Similar to \cref{ex:forest}, each of the simulations in \Cref{fig:trap} show the safe navigation using the tube bounds during different instances of the learning process.
\end{example}

\begin{example}[Improving Optimality]
In the previous two examples, the learned model was simply used to compensate for the uncertainty and was not explicitly used to generate desired trajectories that exploit the newly learned model. In this example, the trajectory optimization solver \cite{howell2019altro} uses the mean dynamics of the GP predictive function from \cref{eqn:function:posterior} to improve the quality of the solution. The quadrotor is tasked to fly from the origin to $(2,4)$ in 10 seconds while minimizing the following LQR objective:
\[
J = \sum_{k=0}^{T-1} \left(x_k^\top Q x_k + u_k^\top R u_k \right) + (x_T - x_{\textrm{goal}})^\top Q_f (x_T - x_{\textrm{goal}}),
\]
where $x_k$ and $u_k$ are the state and controls at the $k^{\textrm{th}}$ time-instant, $x_\textrm{goal}$ is the goal state, and $Q$, $R$ and $Q_f$ are positive definite diagonal matrices. Only the state constraints from \cref{eqn:sims:constraints} are active and no other obstacles are present so that we can clearly see the improvement in optimality. In \Cref{fig:optim_wide}, the vehicle can only reach a maximum of $0.4$ m/s in the body $z-$axis and must therefore exploit the remaining maneuverability in its body $x-$axis to fly to the goal location. This results in a zig-zag flight path with large oscillations in the vehicle pitch. As the learned model is incorporated in \Cref{fig:optim_medium,fig:optim_narrow}, the solver arrives at smoother solutions which don't oscillate as much as the first episode. In \Cref{fig:optim_narrow}, the vehicle is capable of reaching much faster speeds in its body $z-$axis and therefore plans a much more straightforward path to the goal.
\end{example}


\section{Conclusion}
In this work, we have presented the \rellone-\gp framework, which enables safe simultaneous learning and control.
The safety of the method is certified by the tracking error bounds produced by the ancillary $\mathcal{CL}_1$ controller.
The learning is performed using Gaussian process regression.
The learned Gaussian process model can be used to generate high probability uniform error bounds, which are incorporated into the controller to improve the tracking error bounds.
Future work will extend the architecture to leverage the tracking error bounds in the path planning phase. The bounds are used to ensure safety, but can also be extended to provide worst case estimates for both the uncertainty reduction and cost associated with a desired trajectory.
Finally, the guarantees will be extended to a larger class of nonlinear systems, explored in output feedback formulation, and other possible generalizations.


\section*{Acknowledgments}

This work is financially supported by the National Aeronautics and Space Administration (NASA), Air Force Office of Scientific Research (AFOSR), National Science Foundation (NSF) Cyber Physical Systems (CPS) award \# 1932529, and NSF National Robotics Initiative 2.0 (NRI-2.0) award \# 1830639.


\bibliographystyle{abbrvnat}
\bibliography{ref}



\appendix

\section{Proof of Theorem~\ref{thm:uniform_bounds}}\label{app:uniform_bounds}

In order to prove Theorem~\ref{thm:uniform_bounds}, we first establish the bounds on sets of finite cardinality, followed by the Lipschitz continuity of the mean functions and then the modulus of continuity of the variance functions, presented in the following lemmas.

\begin{lemma}[Bounds on sets of finite cardinality]\label{lem:finite_cardinality}
Consider the finite cardinality set $|\mathcal{Z}_\tau|$ in Definition~\ref{def:uniform_bounds} for some $\tau > 0$ and the posterior distributions in~\eqref{eqn:function:posterior}-\eqref{eqn:function_derivative:individual_posterior}. Given any $\delta \in (0,1)$, define

\[
\hat{\beta}(\tau) = 2 \log \left( \frac{m |\mathcal{Z}_\tau|}{\delta} \right), \quad
\hat{\beta}_\xi(\tau) = 2 \log \left( \frac{lm |\mathcal{Z}_\tau|}{\hat{\delta}} \right), \quad
\hat{\beta}_x(\tau) = 2 \log \left( \frac{nm |\mathcal{Z}_\tau|}{\hat{\delta}} \right),
\]
where $\hat{\delta} = 1 - (1-\delta)^{\frac{1}{m}}$.
Then, we have
\begin{align*}
    &\Pr \left\{ \norm{h(z) - \nu_N(z)} \leq \sqrt{\hat{\beta}(\tau)} \norm{\sigma_N(z)}, \quad \forall z \in \mathcal{Z}_\tau  \right\} \geq 1 - \delta,\\
    &\Pr\left\{
\norm{\left(\nabla_\xi [h]_i (z) - \nabla_\xi \nu_{i,N}(z)\right)^\top} \leq \sqrt{\hat{\beta}_\xi(\tau)} \norm{\Sigma_{\xi}^{i,N}(z)}, \quad \forall (i,z) \in \{1,\dots,m\} \times \mathcal{Z}_\tau
\right\} \geq 1 - \hat{\delta},\\
&\Pr\left\{
\norm{\left(\nabla_x [h]_i (z) - \nabla_x \nu_{i,N}(z)\right)^\top} \leq \sqrt{\hat{\beta}_\xi(\tau)} \norm{\Sigma_{x}^{i,N}(z)}, \quad \forall (i,z) \in \{1,\dots,m\} \times \mathcal{Z}_\tau
\right\} \geq 1 - \hat{\delta},
\end{align*} where $\nu_N(z)$, $\sigma_N(z)$, $\Sigma_{\xi}^{i,N}(z)$, and $\Sigma_{x}^{i,N}(z)$ are presented in Definition~\ref{def:uniform_bounds}.
\end{lemma}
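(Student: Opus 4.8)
The plan is to combine a standard Gaussian tail inequality with a union bound over the finite index set; this is the finite-grid base case that the subsequent Lipschitz and modulus-of-continuity lemmas in the appendix will lift to all of $\mathcal{Z}$. First I would recall that, by the posterior~\eqref{eqn:function:posterior}, at a fixed test point the scalar $[h]_i(z)$ is $\mathcal{N}(\nu_{i,N}(z),\sigma_{i,N}^2(z))$, so $([h]_i(z)-\nu_{i,N}(z))/\sigma_{i,N}(z)$ is standard normal. Using the elementary bound $\Pr\{|X|\ge c\}\le e^{-c^2/2}$ for $X\sim\mathcal{N}(0,1)$ and $c\ge 0$, we get for each fixed pair $(i,z)$
\[
\Pr\left\{ |[h]_i(z)-\nu_{i,N}(z)| \ge \sqrt{\hat\beta(\tau)}\,\sigma_{i,N}(z) \right\} \le e^{-\hat\beta(\tau)/2} = \frac{\delta}{m|\mathcal{Z}_\tau|},
\]
by the definition of $\hat\beta(\tau)$. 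A union bound over the $m|\mathcal{Z}_\tau|$ pairs in $\{1,\dots,m\}\times\mathcal{Z}_\tau$ shows that, with probability at least $1-\delta$, the componentwise bound $|[h]_i(z)-\nu_{i,N}(z)|\le\sqrt{\hat\beta(\tau)}\,\sigma_{i,N}(z)$ holds simultaneously for all $(i,z)$; on that event, for every $z\in\mathcal{Z}_\tau$,
\[
\norm{h(z)-\nu_N(z)}^2 = \sum_{i=1}^m ([h]_i(z)-\nu_{i,N}(z))^2 \le \hat\beta(\tau)\sum_{i=1}^m\sigma_{i,N}^2(z) = \hat\beta(\tau)\norm{\sigma_N(z)}^2,
\]
which gives the first claim after taking square roots.

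For the derivative bounds I would repeat the argument using the marginal (per-component) posteriors in~\eqref{eqn:function_derivative:individual_posterior}: $[\nabla_\xi h]_{i,k}(z)\sim\mathcal{N}([\nabla_\xi\nu_{i,N}]_k(z),[\nabla_\xi\sigma_{i,N}^2]_{k,k}(z))$ for $k\in\{1,\dots,l\}$, and analogously for $\nabla_x h$ with $k\in\{1,\dots,n\}$. The same tail estimate bounds the per-triple violation probability by $e^{-\hat\beta_\xi(\tau)/2}=\hat\delta/(lm|\mathcal{Z}_\tau|)$, so a union bound over the $lm|\mathcal{Z}_\tau|$ triples $(i,k,z)$ gives, with probability at least $1-\hat\delta$, the componentwise bounds $|[\nabla_\xi h]_{i,k}(z)-[\nabla_\xi\nu_{i,N}]_k(z)|\le\sqrt{\hat\beta_\xi(\tau)}\,[\nabla_\xi\sigma_{i,N}]_{k,k}(z)$ for all $(i,k,z)$. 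Squaring and summing over $k\in\{1,\dots,l\}$ yields $\norm{(\nabla_\xi[h]_i(z)-\nabla_\xi\nu_{i,N}(z))^\top}\le\sqrt{\hat\beta_\xi(\tau)}\,\norm{\Sigma_\xi^{i,N}(z)}$ for all $(i,z)$, and the $\nabla_x$ inequality follows verbatim with $\hat\beta_x(\tau)$ and $n$ replacing $\hat\beta_\xi(\tau)$ and $l$.

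I do not expect a genuine obstacle here; the delicate points are bookkeeping. First, the constant in the Gaussian tail bound must be such that $e^{-\hat\beta/2}$ (resp.\ $e^{-\hat\beta_\xi/2}$, $e^{-\hat\beta_x/2}$) exactly matches the per-event budget $\delta/(m|\mathcal{Z}_\tau|)$ (resp.\ $\hat\delta/(lm|\mathcal{Z}_\tau|)$, $\hat\delta/(nm|\mathcal{Z}_\tau|)$) baked into the definitions of $\hat\beta,\hat\beta_\xi,\hat\beta_x$. Second, one should note that passing from the joint derivative posterior~\eqref{eqn:function_derivative:posterior} to the marginals~\eqref{eqn:function_derivative:individual_posterior} is legitimate, since we only ever need per-component tail bounds and then reassemble the Euclidean norm through the deterministic identity $\norm{v}^2=\sum_k v_k^2$, so the off-diagonal posterior covariances play no role.
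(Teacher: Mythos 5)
Your proposal is correct and follows essentially the same route as the paper's proof: the standard Gaussian tail bound $\Pr\{|X|\ge c\}\le e^{-c^2/2}$ applied componentwise to the posteriors, a union bound over the $m|\mathcal{Z}_\tau|$ (resp.\ $lm|\mathcal{Z}_\tau|$, $nm|\mathcal{Z}_\tau|$) index pairs, and deterministic reassembly of the Euclidean norm, with the derivative cases handled through the marginal posteriors exactly as you describe. No gaps; your remark that the off-diagonal posterior covariances play no role matches the paper's implicit use of the marginals in~\eqref{eqn:function_derivative:individual_posterior}.
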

\begin{proof}
The proof follows the same line of reasoning as in~\cite[Lemma~5.1]{srinivas2012information}.
Using the posterior distribution in~\cref{eqn:function:posterior}, we have that
\[
  \frac{[h]_i(z) - \nu_{i,N}(z)}{\sigma_{i,N}(z)}  \sim \mathcal{N}(0,1), \quad \forall (z,i) \in \mathcal{Z}_\tau \times \{1,\dots,m\}.
\]
Since for any $r \sim \mathcal{N}(0,1)$ and $c>0$, we have that $\Pr \left\{ |r| > c  \right\} \leq e^{-c^2/2}$.
Then, with $r = ([h]_i(z) - \nu_{i,N}(z))/\sigma_{i,N}(z)$ and $c = \sqrt{\hat{\beta}(\tau)}$, we get that
\[
\Pr \left\{ \left| \frac{[h]_i(z) - \nu_{i,N}(z)}{\sigma_{i,N}(z)} \right| > \sqrt{\hat{\beta}(\tau)}  \right\} \leq e^{-\hat{\beta}(\tau)/2}, \quad \forall (z,i) \in \mathcal{Z}_\tau \times \{1,\dots,m\}.
\]
Thus, for all $(z,i) \in \mathcal{Z}_\tau \times \{1,\dots,m\}$,
\begin{equation}\label{eqn:function:finite_cardinality:1}
 \Pr \left\{ \left| [h]_i(z) - \nu_{i,N}(z) \right| > \sqrt{\hat{\beta}(\tau)}\sigma_{i,N}(z)  \right\} \leq e^{-\hat{\beta}(\tau)/2}.
\end{equation}
Next, let $\mathcal{Z}_\tau \times \{1,\dots,m\} = \bigcup_{k} w_k$, where $k = \{1,\dots,m|\mathcal{Z}_\tau|\}$. Note that each $w_k$ is a pair of the form $(z,i)$, where $z \in \mathcal{Z}_\tau$ and $i \in \{1,\dots,m\}$. Let us define events $A_k$ as
\[
A_k = | [h]_i(z) - \nu_{i,N}(z) |  > \sqrt{\hat{\beta}(\tau)} \sigma_{i,N}(z), \quad (z,i) = w_k.
\]
Then, from~\cref{eqn:function:finite_cardinality:1} we have that $\Pr\{A_k\} \leq e^{-\hat{\beta}(\tau)/2}$, for all $k \in \{1,\dots,m|\mathcal{Z}_\tau|\}$. Applying the union bound (Boole's inequality), we get
\[
\Pr \left\{\bigcup_k A_k  \right\} \leq \sum_{k = 1}^{m|\mathcal{Z}_\tau|} \Pr \{A_k\} \leq m|\mathcal{Z}_\tau| e^{-\hat{\beta}(\tau)/2}.
\] Taking the complement, we obtain that
\begin{align*}
 \Pr \left\{ | [h]_i(z) - \nu_{i,N}(z) | \leq   \sqrt{\hat{\beta}(\tau)} \sigma_{i,N}(z) , \quad  \forall (z,i) \in \mathcal{Z}_\tau \times \{1,\dots,m\}\right\}
\geq 1 - m|\mathcal{Z}_\tau| e^{-\hat{\beta}(\tau)/2}.
\end{align*}
Therefore, we conclude that with probability at least $1 - m|\mathcal{Z}_\tau| e^{-\hat{\beta}(\tau)/2}$ we have
\begin{align*}
    \norm{h(z) - \nu_N(z)} =& \sqrt{ \sum_{i=1}^m \left|[h]_i(z) - \nu_{i,N}(z)   \right|^2 } \leq  \sqrt{ \sum_{i=1}^m \hat{\beta}(\tau) \sigma_{i,N}^2(z) }
    = \sqrt{\hat{\beta}(\tau)} \norm{\sigma_N(z)}, \quad \forall z \in \mathcal{Z}_\tau.
\end{align*}
Using the definition that $\hat{\beta}(\tau) = 2 \log \left( \frac{m|\mathcal{Z}_\tau|}{\delta} \right)$, we conclude that
\[
\Pr \left\{\norm{h(z) - \nu_N(z)} \leq  \sqrt{\hat{\beta}(\tau)} \norm{\sigma_{N}(z)}, \quad \forall z \in \mathcal{Z}_\tau  \right\} \geq 1 - \delta.
\]
Following the same line of reasoning, using the posterior distributions in~\eqref{eqn:function_derivative:posterior}-\eqref{eqn:function_derivative:individual_posterior}, we obtain that
\[
\Pr \left\{
\left|[\nabla_\xi h]_{i,k}(z) - [\nabla_\xi \nu_{i,N}]_k(z)\right| > \sqrt{\hat{\beta}_\xi(\tau)}[\nabla_\xi \sigma_{i,N}]_{k,k}(z)
\right\} \leq e^{-
\hat{\beta}_\xi(\tau)/2},
\]for all $(z,i,k) \in \mathcal{Z}_\tau \times \{1,\dots,m\} \times \{1,\dots,l\}$.

Since $|\mathcal{Z}_\tau \times \{1,\dots,m\} \times \{1,\dots,l\}| = l m |\mathcal{Z}_\tau|$ (cardinality), applying the union bound and taking the complement produces
\begin{align*}
&\Pr \left\{
\left|[\nabla_\xi h]_{i,k}(z) - [\nabla_\xi \nu_{i,N}]_k(z) \right| \leq  \sqrt{\hat{\beta}_\xi(\tau)}[\nabla_\xi \sigma_{i,N}]_{k,k}(z), \quad \forall
(z,i,k) \in \mathcal{Z}_\tau \times \{1,\dots,m\} \times \{1,\dots,l\}\right\} \\
&\geq  1 - lm |\mathcal{Z}_\tau|e^{-\hat{\beta}_\xi(\tau)/2}.
\end{align*}
Therefore, using the definition of the vector 2-norm, we get
\begin{align*}
\Pr\left\{
\norm{\left(\nabla_\xi [h]_i (z) - \nabla_\xi \nu_{i,N}(z)\right)^\top} \leq \sqrt{\hat{\beta}_\xi(\tau)} \norm{\Sigma_{\xi}^{i,N}(z)}, \quad \forall (i,z) \in \{1,\dots,m\} \times \mathcal{Z}_\tau
\right\} \geq 1 - \hat{\delta},
\end{align*} where we have used the definition of $\hat{\beta}_\xi(\tau)$. The proof for $\nabla_x h$ follows similarly.

\end{proof}


We now prove the Lipschitz continuity of the mean functions.
\begin{lemma}[Lipschitz continuity of mean functions]\label{lem:lipschitz}
Consider the posterior distributions in~\eqref{eqn:function:posterior}-\eqref{eqn:function_derivative:individual_posterior}. Then,
\begin{align*}
    \norm{\nu_{N}(z) - \nu_N(z')} \leq  L_{\nu_N} \norm{z - z'},&\\
    \norm{\left(\nabla_\xi \nu_{i,N}(z) - \nabla_\xi \nu_{i,N}(z')  \right)^\top  } \leq \nabla_\xi L_{i,\nu_N}  \norm{z - z'},&\\
     \norm{\left(\nabla_x \nu_{i,N}(z) - \nabla_x \nu_{i,N}(z')  \right)^\top  } \leq  \nabla_x L_{i,\nu_N}\norm{z - z'},&
\end{align*} for all $z,z' \in \mathcal{Z}$ and $i \in \{1,\dots,m\}$, where $L_{\nu_N}$, $\nabla_\xi L_{i,\nu_N}$, and $\nabla_x L_{i,\nu_N}$ are defined in the statement of Theorem~\ref{thm:uniform_bounds}.
\end{lemma}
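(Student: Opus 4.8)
The plan is to exploit that, for each output dimension $i\in\{1,\dots,m\}$, the posterior mean is an \emph{affine functional} of the kernel evaluated at the fixed training inputs. Introduce the data-dependent but test-point-independent vector $\alpha_i := \left[K_i(\mathbf{Z},\mathbf{Z}) + \sigma^2 \mathbb{I}_N\right]^{-1}\left([\mathbf{Y}]_{i,\cdot}\right)^\top \in \mathbb{R}^N$. Then, reading off~\eqref{eqn:function:posterior} and~\eqref{eqn:function_derivative:posterior}, we have $\nu_{i,N}(z) = \sum_{k=1}^N [\alpha_i]_k\, K_i(z,z_k)$, while $\left(\nabla_\xi \nu_{i,N}(z)\right)^\top$ and $\left(\nabla_x \nu_{i,N}(z)\right)^\top$ are the linear combinations, with the same fixed weights $[\alpha_i]_k$, of the kernel-gradient vectors $\left(\nabla_\xi K_i(z,z_k)\right)^\top\in\mathbb{R}^l$ and $\left(\nabla_x K_i(z,z_k)\right)^\top\in\mathbb{R}^n$. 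Hence when we difference $\nu_{i,N}(z)-\nu_{i,N}(z')$ (and likewise for the gradients) the weights survive unchanged and only kernel differences remain, so all three claims reduce to a modulus-of-continuity bound, uniform over $z_k$, on the maps $z\mapsto K_i(z,z_k)$, $z\mapsto \nabla_\xi K_i(z,z_k)$, $z\mapsto \nabla_x K_i(z,z_k)$.

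For the first bound I would apply the mean value inequality to $z\mapsto K_i(z,z_k)$ on the segment $[z',z]\subset\mathcal{Z}$ and use $\sup_{z,z'\in\mathcal{Z}}\norm{\nabla_z K_i(z,z')}=L_{K_i}$ from Assumption~\ref{assmp:GP}, giving $|K_i(z,z_k)-K_i(z',z_k)|\le L_{K_i}\norm{z-z'}$ for every training input $z_k$. Summing over $k$, $|\nu_{i,N}(z)-\nu_{i,N}(z')|\le L_{K_i}\norm{z-z'}\sum_{k=1}^N|[\alpha_i]_k| = L_{K_i}\norm{\alpha_i}_1\norm{z-z'}\le \sqrt{N}\,L_{K_i}\norm{\alpha_i}\norm{z-z'}$, where the last step is the Cauchy--Schwarz estimate $\norm{\cdot}_1\le\sqrt{N}\norm{\cdot}_2$ on $\mathbb{R}^N$. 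Stacking the $m$ components and taking a square root, $\norm{\nu_N(z)-\nu_N(z')}^2 = \sum_{i=1}^m |\nu_{i,N}(z)-\nu_{i,N}(z')|^2 \le N\norm{z-z'}^2\sum_{i=1}^m L_{K_i}^2\norm{\alpha_i}^2 = L_{\nu_N}^2\norm{z-z'}^2$, which is exactly the claimed inequality with $L_{\nu_N}$ as defined in the statement of Theorem~\ref{thm:uniform_bounds}.

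The two derivative bounds follow the same template. Using the twice-continuous differentiability of $K_i$ (Assumption~\ref{assmp:GP}), I would apply the mean value inequality to the vector-valued maps $z\mapsto\left(\nabla_\xi K_i(z,z_k)\right)^\top$ and $z\mapsto\left(\nabla_x K_i(z,z_k)\right)^\top$, bounding their derivatives in the test argument by the second-order kernel constants $\nabla_\xi L_{K_i}$ and $\nabla_x L_{K_i}$, to obtain $\norm{\left(\nabla_\xi K_i(z,z_k)\right)^\top-\left(\nabla_\xi K_i(z',z_k)\right)^\top}\le \nabla_\xi L_{K_i}\norm{z-z'}$ and the analogue for $\nabla_x$. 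Then $\norm{\left(\nabla_\xi\nu_{i,N}(z)-\nabla_\xi\nu_{i,N}(z')\right)^\top}\le \sum_{k=1}^N|[\alpha_i]_k|\,\nabla_\xi L_{K_i}\norm{z-z'}\le \sqrt{N}\,\nabla_\xi L_{K_i}\norm{\alpha_i}\norm{z-z'} = \nabla_\xi L_{i,\nu_N}\norm{z-z'}$, and the same chain with $\nabla_x$ yields the bound with $\nabla_x L_{i,\nu_N}$.

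The argument is essentially bookkeeping; the points that need care are (i) noting that $\alpha_i$ does not depend on the test point, so differencing leaves only kernel differences, (ii) the $\ell_1$-to-$\ell_2$ passage that produces the $\sqrt{N}$ factor in each constant, and (iii) making sure the modulus of continuity of $z\mapsto\nabla_\xi K_i(z,z_k)$ and $z\mapsto\nabla_x K_i(z,z_k)$ in the test argument is correctly controlled by the relevant second-derivative kernel constants from Assumption~\ref{assmp:GP}. I do not expect a genuine obstacle beyond (iii).
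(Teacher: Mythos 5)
Your proof is correct and follows essentially the same route as the paper's: both reduce the claim to Lipschitz bounds on $z\mapsto K_i(z,z_k)$ and its gradients via the constants of Assumption~\ref{assmp:GP}, with the test-point-independent weight vector $\left[K_i(\mathbf{Z},\mathbf{Z})+\sigma^2\mathbb{I}_N\right]^{-1}\left([\mathbf{Y}]_{i,\cdot}\right)^\top$ factored out and the $\sqrt{N}$ arising from a Cauchy--Schwarz step (the paper applies it to the kernel-difference vector paired with the weights, you apply it as $\norm{\cdot}_1\le\sqrt{N}\norm{\cdot}_2$ on the weights — the resulting constants are identical). The care point you flag in (iii) is handled in the paper exactly as you propose, by invoking the second-derivative kernel constants $\nabla_\xi L_{K_i}$ and $\nabla_x L_{K_i}$.
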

\begin{proof}

Using the definition of $\nu_{i,N}(z)$ in~\eqref{eqn:function:posterior}, we get
\[
\left| \nu_{i,N}(z) - \nu_{i,N}(z') \right| = \left|
\left( K_i(z,\mathbf{Z}) - K_i(z',\mathbf{Z})  \right)^\top
\left[K_i(\mathbf{Z},\mathbf{Z}) + \sigma^2 \mathbb{I}_N  \right]^{-1} \left([\mathbf{Y}]_{i,\cdot}\right)^\top
\right|, \quad \forall z,z' \in \mathcal{Z},~i \in \{1,\dots,m\}.
\]
Applying the Cauchy-Schwarz inequality, we get
\begin{equation}\label{eqn:function:Lipschitz_continuity:1}
    \left| \nu_{i,N}(z)  - \nu_{i,N}(z') \right| \le
\norm{ K_i(z,\mathbf{Z}) - K_i(z',\mathbf{Z}) }
\norm{ \left[K_i(\mathbf{Z},\mathbf{Z}) + \sigma^2 \mathbb{I}_N  \right]^{-1} \left([\mathbf{Y}]_{i,\cdot}\right)^\top },
\end{equation}
for all $z,z' \in \mathcal{Z}$ and $i \in \{1,\dots,m\}$. From the definition of $K_i(z,\mathbf{Z})$ in~\cref{eqn:function:posterior}, we get that
\[
\norm{ K_i(z,\mathbf{Z}) - K_i(z',\mathbf{Z}) }^2
=
 \sum_{j=1}^N \left( K_i(z,z_j) - K_i(z',z_j)   \right)^2  .
\] Using the Lipschitz continuity of the kernel functions in Assumption~\ref{assmp:GP}, we further obtain
\begin{align}
    \norm{ K_i(z,\mathbf{Z}) - K_i(z',\mathbf{Z}) }^2
=   \sum_{j=1}^N \left( K_i(z,z_j) - K_i(z',z_j)   \right)^2  \leq &  \sum_{j=1}^N L_{K_i}^2 \norm{z - z'}^2 \notag   \\
= &\label{eqn:function:Lipschitz_continuity:2} N L_{K_i}^2 \norm{z - z'}^2, \quad \forall z,z' \in \mathcal{Z},~i \in \{1,\dots,m\}.
\end{align}
Using this inequality with~\cref{eqn:function:Lipschitz_continuity:1} produces
\[
\left| \nu_{i,N}(z) - \nu_{i,N}(z')  \right|^2 \leq
 N L_{K_i}^2
\norm{ \left[K_i(\mathbf{Z},\mathbf{Z}) + \sigma^2 \mathbb{I}_N  \right]^{-1} \left([\mathbf{Y}]_{i,\cdot}\right)^\top }^2  \norm{z - z'}^2,
\] for all $z,z' \in \mathcal{Z}$ and $i \in \{1,\dots,m\}$.
Therefore, we obtain
\begin{align*}
    \norm{\nu_{N}(z) - \nu_N(z')} = & \sqrt{ \sum_{i=1}^m \left| \nu_{i,N}(z) - \nu_{i,N}(z') \right|^2     } \\
    \leq & \sqrt{ \sum_{i=1}^m N L_{K_i}^2
\norm{ \left[K_i(\mathbf{Z},\mathbf{Z}) + \sigma^2 \mathbb{I}_N  \right]^{-1} \left([\mathbf{Y}]_{i,\cdot}\right)^\top }^2  \norm{z - z'}^2     } \\
=& \sqrt{ N \left( \sum_{i=1}^m  L_{K_i}^2
\norm{ \left[K_i(\mathbf{Z},\mathbf{Z}) + \sigma^2 \mathbb{I}_N  \right]^{-1} \left([\mathbf{Y}]_{i,\cdot}\right)^\top }^2 \right)}\norm{z - z'} \\
= & L_{\nu_N} \norm{z - z'}, \quad \forall z,z' \in \mathcal{Z}.
\end{align*}

Continuing, using the definition of $\nabla_\xi \nu_{i,N}(z)$ in~\cref{eqn:function_derivative:posterior}, we get that
\begin{align}
    \norm{\left(\nabla_\xi \nu_{i,N}(z) - \nabla_\xi \nu_{i,N}(z')  \right)^\top  } &\label{lem:function_derivative:Lip:1} \leq \norm{ \nabla_\xi K_i(z,\mathbf{Z}) - \nabla_\xi K_i(z',\mathbf{Z})}\norm{\left[ K_i(\mathbf{Z},\mathbf{Z}) + \sigma^2 \mathbb{I}_N\right]^{-1}\left([\mathbf{Y}]_{i,\cdot}\right)^\top},
\end{align} for all $i \in \{1,\dots,m\}$ and $z,z' \in \mathcal{Z}$.

Then we have
\begin{align*}
    \norm{\nabla_\xi K_i(z,\mathbf{Z}) - \nabla_\xi K_i(z',\mathbf{Z}) }_F =&
    \sqrt{ \sum_{j=1}^N \sum_{k = 1}^l \left| \pdv{K_i}{\xi_k}(z,z_j) - \pdv{K_i}{\xi_k}(z',z_j)   \right|^2   } \\
    =& \sqrt{ \sum_{j=1}^N \norm{\left( \nabla_\xi K_i(z,z_j) - \nabla_\xi K_i(z',z_j)    \right)^\top  }^2           },
\end{align*}
where, $\norm{\cdot}_F$ denotes the Frobenius norm.
Using~\cref{assmp:GP}, we obtain
\begin{align*}
    \norm{\nabla_\xi K_i(z,\mathbf{Z}) - \nabla_\xi K_i(z',\mathbf{Z})  }_F \leq &
    \sqrt{ N \left( \nabla_\xi L_{K_i} \right)^2   \norm{z - z'}^2} =\sqrt{N} \nabla_\xi L_{K_i} \norm{z - z'},
\end{align*} for all $i \in \{1,\dots,m\}$ and $z,z' \in \mathcal{Z}$. Since $\norm{\cdot} \leq \norm{\cdot}_F$, substituting the aforementioned expression into~\cref{lem:function_derivative:Lip:1} produces the desired result. The Lipschitz continuity of $\nabla_x \nu_{i,N}(z)$ is established similarly.

\end{proof}

Next, we proceed towards deriving the modulus of continuity of the variance functions.
\begin{lemma}[Modulus of continuity of variance functions]\label{lem:modulus}
Consider the posterior distributions in~\eqref{eqn:function:posterior}-\eqref{eqn:function_derivative:individual_posterior}. Then,
\begin{align*}
    \norm{\sigma_N(z) - \sigma_N(z')} \leq \omega_N\left(\norm{z - z'}\right),&\\
    \norm{ \Sigma_{\xi}^{i,N}(z) - \Sigma_{\xi}^{i,N}(z')  } \leq \nabla_\xi \omega_{i,N} \left( \norm{z - z'} \right),&\\
\norm{ \Sigma_{x}^{i,N}(z) - \Sigma_{x}^{i,N}(z')  } \leq \nabla_x \omega_{i,N} \left( \norm{z - z'} \right),&
\end{align*} for all $z,z' \in \mathcal{Z}$ and $i \in \{1,\dots,m\}$, where $\sigma_N(z)$, $\Sigma_{\xi}^{i,N}(z)$, and $\Sigma_x^{i,N}(z)$ are presented in Definition~\ref{def:uniform_bounds}, and $\omega_N(\cdot)$, $\nabla_\xi \omega_{i,N}(\cdot)$, and $\nabla_x \omega_{i,N}(\cdot)$ are defined in the statement of Theorem~\ref{thm:uniform_bounds}.
\end{lemma}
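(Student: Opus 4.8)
The plan is to follow, almost verbatim, the template already established in Lemmas~\ref{lem:finite_cardinality} and~\ref{lem:lipschitz}: reduce the vector-valued claim to scalar componentwise estimates, first bound the variation of the posterior \emph{variances}, then pass from variances to posterior \emph{standard deviations} via the elementary inequality $|\sqrt{a}-\sqrt{b}|\le\sqrt{|a-b|}$ valid for $a,b\ge 0$, and finally reassemble the components with the Euclidean norm. The factor $\sqrt{\norm{z-z'}}$ sitting inside $\omega_N$ is precisely the signature of that square-root step, so reproducing it is the crux.

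Fix an output index $i$ and abbreviate $A_i := K_i(\mathbf{Z},\mathbf{Z}) + \sigma^2\mathbb{I}_N$ and $k_i(z) := K_i(z,\mathbf{Z})$, so that $\sigma_{i,N}^2(z) = K_i(z,z) - k_i(z)^\top A_i^{-1} k_i(z)$. First I would estimate $|\sigma_{i,N}^2(z) - \sigma_{i,N}^2(z')|$ by splitting it into a ``prior'' part $|K_i(z,z)-K_i(z',z')|$ and a ``data'' part $|k_i(z)^\top A_i^{-1}k_i(z) - k_i(z')^\top A_i^{-1}k_i(z')|$. For the prior part, insert the intermediate value $K_i(z',z)$, use symmetry of $K_i$, and apply the Lipschitz bound $L_{K_i}$ of Assumption~\ref{assmp:GP} in each argument separately to obtain $\le 2L_{K_i}\norm{z-z'}$. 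For the data part, add and subtract $k_i(z)^\top A_i^{-1}k_i(z')$ and apply Cauchy--Schwarz to get $\le\norm{A_i^{-1}}(\norm{k_i(z)}+\norm{k_i(z')})\,\norm{k_i(z)-k_i(z')}$; here $\norm{k_i(z)}\le\sqrt{N}\max_{z,z'\in\mathcal{Z}}K_i(z,z')$, and exactly as in~\eqref{eqn:function:Lipschitz_continuity:2} (from the proof of Lemma~\ref{lem:lipschitz}) one has $\norm{k_i(z)-k_i(z')}\le\sqrt{N}L_{K_i}\norm{z-z'}$, so the data part is $\le 2N\norm{A_i^{-1}}L_{K_i}\bigl(\max_{z,z'\in\mathcal{Z}}K_i(z,z')\bigr)\norm{z-z'}$. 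Adding the two pieces gives $|\sigma_{i,N}^2(z)-\sigma_{i,N}^2(z')|\le 2L_{K_i}\bigl(1+N\norm{A_i^{-1}}\max_{z,z'\in\mathcal{Z}}K_i(z,z')\bigr)\norm{z-z'}$.

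Next, taking square roots with $|\sqrt{a}-\sqrt{b}|\le\sqrt{|a-b|}$ yields $|\sigma_{i,N}(z)-\sigma_{i,N}(z')|\le\sqrt{2L_{K_i}\bigl(1+N\norm{A_i^{-1}}\max_{z,z'\in\mathcal{Z}}K_i(z,z')\bigr)}\,\sqrt{\norm{z-z'}}$, and then $\norm{\sigma_N(z)-\sigma_N(z')} = \left(\sum_{i=1}^m(\sigma_{i,N}(z)-\sigma_{i,N}(z'))^2\right)^{1/2}$; pulling the common factor $\norm{z-z'}$ out of the sum lands exactly on $\omega_N(\norm{z-z'})$ as defined in Theorem~\ref{thm:uniform_bounds}. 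The two derivative statements go through identically: the $(k,k)$ diagonal entry of $\nabla_\xi\sigma_{i,N}^2(z)$ has the form $[\nabla^2_{\xi,\xi'}K_i(z,z)]_{k,k} - v_{i,k}(z)^\top A_i^{-1}v_{i,k}(z)$ with $v_{i,k}(z)$ the $k$-th column of $\nabla_\xi K_i(z,\mathbf{Z})$; repeating the prior/data split with the kernel-derivative bound $\nabla_\xi L_{K_i}$ (resp. $\nabla_x L_{K_i}$) of Assumption~\ref{assmp:GP} in place of $L_{K_i}$, with the coordinate-wise maxima of the kernel $\xi$-derivatives (resp. $x$-derivatives) in place of $\max_{z,z'\in\mathcal{Z}}K_i(z,z')$, and summing over $k\in\{1,\dots,l\}$ (resp. $\{1,\dots,n\}$) before the square-root step, produces the claimed bounds $\nabla_\xi\omega_{i,N}$ and $\nabla_x\omega_{i,N}$.

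I expect the genuine difficulty to be bookkeeping rather than any new idea: getting the constant $2$ in the prior term right via kernel symmetry, and, in the derivative case, making sure that the mixed second derivatives of $K_i$ that actually occur in $\nabla_\xi\sigma_{i,N}^2$ (and in the Lipschitz estimate for $v_{i,k}$) are the ones controlled by $\nabla_\xi L_{K_i}$, $\nabla_x L_{K_i}$ from Assumption~\ref{assmp:GP}, and that bounding $\norm{v_{i,k}(z)}$ and then summing over the coordinate index $k$ is what produces the per-coordinate \emph{sum} of maxima inside $\nabla_\xi\omega_{i,N}$, rather than a single maximum. Everything else is Cauchy--Schwarz together with the square-root inequality already exercised in the two preceding lemmas.
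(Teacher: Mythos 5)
Your proposal is correct and follows essentially the same route as the paper's proof: the prior/data split of $|\sigma_{i,N}^2(z)-\sigma_{i,N}^2(z')|$, the $2L_{K_i}$ bound via kernel symmetry, the $\sqrt{N}L_{K_i}$ and $2\sqrt{N}\max K_i$ estimates for the kernel-vector terms, the square-root step (your $|\sqrt{a}-\sqrt{b}|\le\sqrt{|a-b|}$ is exactly the paper's positivity argument $|\sigma_{i,N}^2(z)-\sigma_{i,N}^2(z')|\ge|\sigma_{i,N}(z)-\sigma_{i,N}(z')|^2$), and the identical per-coordinate treatment of the diagonal entries of $\nabla_\xi\sigma_{i,N}^2$ and $\nabla_x\sigma_{i,N}^2$ before reassembling with the Euclidean norm. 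The only cosmetic difference is that you add and subtract the cross term $k_i(z)^\top A_i^{-1}k_i(z')$ where the paper factors the difference of quadratic forms as $(k_i(z)-k_i(z'))^\top A_i^{-1}(k_i(z)+k_i(z'))$, which yields the same bound.
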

\begin{proof}

Using the positivity of the variance functions, we get
\begin{equation}\label{eqn:function:modulus_continuity:1}
    \left|\sigma_{i,N}^2(z) - \sigma_{i,N}^2(z') \right| \geq \left|\sigma_{i,N}(z) - \sigma_{i,N}(z') \right|^2, \quad \forall z,z' \in \mathcal{Z},~i \in \{1,\dots,m\}.
\end{equation}
Using the definition of $\sigma_{i,N}(z)$ in~\cref{eqn:function:posterior}, we upper bound
\begin{align}
    \left|\sigma_{i,N}^2(z) - \sigma_{i,N}^2(z') \right|  \leq &
    \left| K_i(z,z) - K_i(z',z')\right|  \notag \\
    &\label{eqn:function:modulus_continuity:2}+ \norm{K_i(z,\mathbf{Z}) - K_i(z',\mathbf{Z})} \norm{\left[K_i(\mathbf{Z},\mathbf{Z})
    + \sigma^2 \mathbb{I}_N  \right]^{-1}} \norm{K_i(z,\mathbf{Z}) + K_i(z',\mathbf{Z})},
\end{align}
for all $z,z' \in \mathcal{Z}$ and $i \in \{1,\dots,m\}$, where we have used the fact that
\begin{align*}
&K_i(z,\mathbf{Z})^\top \left[K_i(\mathbf{Z},\mathbf{Z}) + \sigma^2 \mathbb{I}_N  \right]^{-1}  K_i(z,\mathbf{Z})    - K_i(z',\mathbf{Z})^\top \left[K_i(\mathbf{Z},\mathbf{Z}) + \sigma^2 \mathbb{I}_N  \right]^{-1}  K_i(z',\mathbf{Z}) \\
& = \left(K_i(z,\mathbf{Z}) - K_i(z',\mathbf{Z})  \right)^\top \left[K_i(\mathbf{Z},\mathbf{Z}) + \sigma^2 \mathbb{I}_N  \right]^{-1}
\left(K_i(z,\mathbf{Z}) + K_i(z',\mathbf{Z})  \right).
\end{align*}
Next, using the Lipschitz continuity of the kernel functions in~\cref{assmp:GP} and their symmetry in the arguments, we get that
\begin{align}
    \left| K_i(z,z) - K_i(z',z')\right| = & \left| K_i(z,z) - K_i(z,z') + K_i(z,z') - K_i(z',z')\right| \notag \\
    \leq & \left| K_i(z,z) - K_i(z',z) \right| + \left| K_i(z,z') - K_i(z',z') \right| \notag \\
    = &\label{eqn:function:modulus_continuity:3} 2 L_{K_i} \norm{z - z'}, \quad \forall z,z' \in \mathcal{Z},~i \in \{1,\dots,m\}.
\end{align}
Further, using~\cref{eqn:function:Lipschitz_continuity:2} from the proof of~\cref{lem:lipschitz}, we get
\begin{equation}\label{eqn:function:modulus_continuity:4}
   \norm{ K_i(z,\mathbf{Z}) - K_i(z',\mathbf{Z}) } \leq \sqrt{N} L_{K_i} \norm{z - z'}, \quad \forall z,z' \in \mathcal{Z},~i \in \{1,\dots,m\}.
\end{equation}
Moreover, we have the following identity
\begin{equation}\label{eqn:function:modulus_continuity:5}
    \norm{K_i(z,\mathbf{Z}) + K_i(z',\mathbf{Z})} \leq 2 \sqrt{N} \max_{z,z' \in \mathcal{Z}} K_i(z,z'), \quad \forall z,z' \in \mathcal{Z},~i \in \{1,\dots,m\}.
\end{equation}
Substituting~\cref{eqn:function:modulus_continuity:3}-\cref{eqn:function:modulus_continuity:5} into~\cref{eqn:function:modulus_continuity:2}, we get
\begin{align*}
    &\left|\sigma_{i,N}^2(z) - \sigma_{i,N}^2(z') \right| \leq  2 L_{K_i} \norm{z - z'} \left(1 + N \norm{\left[K_i(\mathbf{Z},\mathbf{Z}) + \sigma^2 \mathbb{I}_N  \right]^{-1}} \max_{z,z' \in \mathcal{Z}} K_i(z,z')  \right),
\end{align*}
for all $z,z' \in \mathcal{Z}$, $i \in \{1,\dots,m\}$. Then, using~\cref{eqn:function:modulus_continuity:1}, we get
\begin{align*}
  &\left|\sigma_{i,N}(z) - \sigma_{i,N}(z') \right|^2  \leq 2 L_{K_i} \norm{z - z'} \left(1 + N \norm{\left[K_i(\mathbf{Z},\mathbf{Z}) + \sigma^2 \mathbb{I}_N  \right]^{-1}} \max_{z,z' \in \mathcal{Z}} K_i(z,z')  \right),
\end{align*}
for all $z,z' \in \mathcal{Z}$, $i \in \{1,\dots,m\}$. Therefore
\begin{align*}
    \norm{\sigma_N(z) - \sigma_N(z')} &= \sqrt{\sum_{i=1}^m \left|\sigma_{i,N}(z) - \sigma_{i,N}(z') \right|^2  } \\
    &\leq  \sqrt{ 2 \norm{z - z'} \sum_{i=1}^m   L_{K_i}  \left(1 + N \norm{\left[K_i(\mathbf{Z},\mathbf{Z}) + \sigma^2 \mathbb{I}_N  \right]^{-1}} \max_{z,z' \in \mathcal{Z}} K_i(z,z')  \right) }\\
    &= \omega_N\left(\norm{z - z'}\right), \quad \forall z,z' \in \mathcal{Z}.
\end{align*}
Continuing on for the variance of the partial derivatives of the uncertainty, from~\eqref{eqn:function_derivative:posterior}-\eqref{eqn:function_derivative:individual_posterior} we have
\begin{align*}
    \left[ \nabla_\xi \sigma_{i,N}^2 \right]_{k,k}(z) = & \frac{\partial^2 K_i}{\partial \xi_k \partial \xi'_k}(z,z) -
    \begin{bmatrix}
    \pdv{K_i}{\xi_k}(z,z_1) \\ \vdots \\ \pdv{K_i}{\xi_k}(z,z_N)
    \end{bmatrix}^\top
    \left[ K_i(\mathbf{Z},\mathbf{Z}) + \sigma^2 \mathbb{I}_N\right]^{-1}
    \begin{bmatrix}
    \pdv{K_i}{\xi_k}(z,z_1) \\ \vdots \\ \pdv{K_i}{\xi_k}(z,z_N)
    \end{bmatrix},
\end{align*} for all $(i,k) \in \{1,\dots,m\} \times \{1,\dots,l\}$ and $z \in \mathcal{Z}$.
Following the same line of reasoning as for the posterior variance of the uncertainty, we have that
\begin{equation}\label{eqn:function_derivative:mod_cont:1}
\left| \left[ \nabla_\xi \sigma_{i,N}^2 \right]_{k,k}(z) - \left[ \nabla_\xi \sigma_{i,N}^2 \right]_{k,k}(z') \right|
\geq
\left| \left[ \nabla_\xi \sigma_{i,N} \right]_{k,k}(z) - \left[ \nabla_\xi \sigma_{i,N} \right]_{k,k}(z') \right|^2,
\end{equation}
for all $(i,k) \in \{1,\dots,m\} \times \{1,\dots,l\}$ and $z,z' \in \mathcal{Z}$.
Thus, we have
\begin{align}
   &\left| \left[ \nabla_\xi \sigma_{i,N}^2 \right]_{k,k}(z) - \left[ \nabla_\xi \sigma_{i,N}^2 \right]_{k,k}(z') \right|
     \\
    & \leq \left| \frac{\partial^2 K_i}{\partial \xi_k \partial \xi'_k}(z,z) - \frac{\partial^2 K_i}{\partial \xi_k \partial \xi'_k}(z',z') \right|  +
  \norm{\begin{bmatrix}
    \pdv{K_i}{\xi_k}(z,z_1) - \pdv{K_i}{\xi_k}(z',z_1) \\ \vdots \\ \pdv{K_i}{\xi_k}(z,z_N) - \pdv{K_i}{\xi_k}(z',z_N)
    \end{bmatrix}}
    \norm{\left[ K_i(\mathbf{Z},\mathbf{Z}) + \sigma^2 \mathbb{I}_N\right]^{-1}} \notag \\
    &\label{eqn:function_derivative:mod_cont:2}\qquad \qquad \times \norm{\begin{bmatrix}
    \pdv{K_i}{\xi_k}(z,z_1) + \pdv{K_i}{\xi_k}(z',z_1) \\ \vdots \\ \pdv{K_i}{\xi_k}(z,z_N) + \pdv{K_i}{\xi_k}(z',z_N)
    \end{bmatrix}},
\end{align} for all $(i,k) \in \{1,\dots,m\} \times \{1,\dots,l\}$ and $z,z' \in \mathcal{Z}$.

As before, we now proceed by bounding the terms on the right hand side using Assumption~\ref{assmp:GP}. We start with
\begin{align}
 \left| \frac{\partial^2 K_i}{\partial \xi_k \partial \xi'_k}(z,z) - \frac{\partial^2 K_i}{\partial \xi_k \partial \xi'_k}(z',z') \right|
 & \leq \left| \frac{\partial^2 K_i}{\partial \xi_k \partial \xi'_k}(z,z) - \frac{\partial^2 K_i}{\partial \xi_k \partial \xi'_k}(z,z') \right| +
 \left| \frac{\partial^2 K_i}{\partial \xi_k \partial \xi'_k}(z,z') - \frac{\partial^2 K_i}{\partial \xi_k \partial \xi'_k}(z',z') \right| \notag \\
 &\label{eqn:function_derivative:mod_cont:3} \leq 2 \nabla_\xi L_{K_i} \norm{z - z'}, \quad \forall (i,k) \in \{1,\dots,m\} \times \{1,\dots,l\},~z,z' \in \mathcal{Z}.
\end{align}
Proceeding further, we have
\begin{align}
    \norm{\begin{bmatrix}
    \pdv{K_i}{\xi_k}(z,z_1) - \pdv{K_i}{\xi_k}(z',z_1) \\ \vdots \\ \pdv{K_i}{\xi_k}(z,z_N) - \pdv{K_i}{\xi_k}(z',z_N)
    \end{bmatrix}} \leq
    \sqrt{ \sum_{j=1}^N \left| \pdv{K_i}{\xi_k}(z,z_j) - \pdv{K_i}{\xi_k}(z',z_j) \right|^2  } \leq &
    \sqrt{ \sum_{j=1}^N \sum_{k=1}^l \left| \pdv{K_i}{\xi_k}(z,z_j) - \pdv{K_i}{\xi_k}(z',z_j) \right|^2  } \notag \\
    \leq &\label{eqn:function_derivative:mod_cont:4} \sqrt{N}\nabla_\xi L_{K_i} \norm{z - z'},
\end{align}
for all $(i,k) \in \{1,\dots,m\} \times \{1,\dots,l\}$ and $z,z' \in \mathcal{Z}$, and where we have used the computation used in~\cref{lem:lipschitz}.
Finally,
\begin{align}
    \norm{\begin{bmatrix}
    \pdv{K_i}{\xi_k}(z,z_1) + \pdv{K_i}{\xi_k}(z',z_1) \\ \vdots \\ \pdv{K_i}{\xi_k}(z,z_N) + \pdv{K_i}{\xi_k}(z',z_N)
    \end{bmatrix}} &\leq  \sqrt{  \sum_{j=1}^N \left| \pdv{K_i}{\xi_k}(z,z_j) + \pdv{K_i}{\xi_k}(z',z_j)  \right|^2 } \notag \\
    &\label{eqn:function_derivative:mod_cont:5} \leq 2 \sqrt{N} \max_{z,z' \in \mathcal{Z}}\left| \pdv{K_i}{\xi_k}(z,z') \right|.
\end{align}
Substituting~\eqref{eqn:function_derivative:mod_cont:3}-\eqref{eqn:function_derivative:mod_cont:5} into~\eqref{eqn:function_derivative:mod_cont:2} produces
\begin{align*}
     &\left| \left[ \nabla_\xi \sigma_{i,N}^2 \right]_{k,k}(z) - \left[ \nabla_\xi \sigma_{i,N}^2 \right]_{k,k}(z') \right| \\
     & \leq 2 \nabla_\xi L_{K_i} \norm{z - z'}  + \sqrt{N} \nabla_\xi L_{K_i} \norm{z - z'} \norm{\left[ K_i(\mathbf{Z},\mathbf{Z}) + \sigma^2 \mathbb{I}_N\right]^{-1}} 2 \sqrt{N}\max_{z,z' \in \mathcal{Z}}\left| \pdv{K_i}{\xi_k}(z,z') \right|,
\end{align*}
for all $(i,k) \in \{1,\dots,m\} \times \{1,\dots,l\}$ and $z,z' \in \mathcal{Z}$.
Thus, from~\eqref{eqn:function_derivative:mod_cont:1}, we get
\begin{align}
     &\left| \left[ \nabla_\xi \sigma_{i,N} \right]_{k,k}(z) - \left[ \nabla_\xi \sigma_{i,N} \right]_{k,k}(z') \right|^2 \notag \\
&\label{eqn:function_derivative:mod_cont:6} \leq 2 \nabla_\xi L_{K_i} \norm{z - z'} \left(1 + N \norm{\left[ K_i(\mathbf{Z},\mathbf{Z}) + \sigma^2 \mathbb{I}_N\right]^{-1}}  \max_{z,z' \in \mathcal{Z}}\left| \pdv{K_i}{\xi_k}(z,z') \right| \right),
\end{align}
for all $(i,k) \in \{1,\dots,m\} \times \{1,\dots,l\}$ and $z,z' \in \mathcal{Z}$.
Next, from~\cref{def:uniform_bounds}, we have
\begin{align*}
    \norm{ \Sigma_\xi^{i,N}(z) - \Sigma_\xi^{i,N}(z')  } = \sqrt{ \sum_{k = 1}^l  \left| \left[ \nabla_\xi \sigma_{i,N} \right]_{k,k}(z) - \left[ \nabla_\xi \sigma_{i,N} \right]_{k,k}(z') \right|^2 }.
\end{align*}
Then, using~\cref{eqn:function_derivative:mod_cont:6} gives us the modulus of continuity for $\Sigma_\xi^{i,N}(z)$. The proof for the modulus of continuity of $\Sigma_x^{i,N}(z)$ follows similarly.
\end{proof}

We now use the results in Lemmas~\ref{lem:finite_cardinality}-\ref{lem:modulus} to prove Theorem~\ref{thm:uniform_bounds}.
\begin{proof}[Proof of Theorem~\ref{thm:uniform_bounds}]

Using Assumptions~\ref{assmp:models:functions} and~\ref{assmp:models:functions_derivatives}, it is straightforward to establish that
\begin{equation}\label{eqn:thm_function1a}
    \norm{h(z) - h(z')} \leq  \left(\Delta_{h_x} + \Delta_{h_\xi}  \right)\norm{z - z'}, \quad \forall z \in \mathcal{Z},~z' \in \mathcal{Z}_\tau.
\end{equation}
Furthermore, by Lemmas~\ref{lem:lipschitz} and~\ref{lem:modulus}, we have that
\begin{subequations}
\begin{align}
\norm{\nu_N(z) - \nu_N(z')} \leq &\label{eqn:thm_function1b} L_{\nu_N}\norm{z - z'},\\
\norm{\sigma_N(z) - \sigma_N(z')} \leq &\label{eqn:thm_function1c} \omega_N \left( \norm{z - z'} \right),
\end{align}
\end{subequations}
for all $z \in \mathcal{Z}$ and $z' \in \mathcal{Z}_\tau$.

We can now compute the upper bound
\[
\norm{h(z) - \nu_N(z)} \leq \norm{h(z) - h(z')} + \norm{\nu_N(z) - \nu_N(z')} + \norm{h(z') - \nu_N(z')}, \quad \forall z \in \mathcal{Z},~z' \in \mathcal{Z}_\tau.
\]
Therefore, from Lemma~\ref{lem:finite_cardinality} and Equations~\eqref{eqn:thm_function1a} and~\eqref{eqn:thm_function1b} we get
\begin{equation}\label{eqn:thm_function:2}
\norm{h(z) - \nu_{N}(z)} \leq \left(\Delta_{h_x} + \Delta_{h_\xi} + L_{\nu_N}  \right)\norm{z - z'} + \sqrt{\hat{\beta}(\tau)} \norm{\sigma_N(z')},
\end{equation} for all $z \in \mathcal{Z}$, $z' \in \mathcal{Z}_\tau$ holds w.p. at-least $1-\delta$.

Further, we have
\[
\norm{\sigma_N(z')} \leq \norm{\sigma_N(z') - \sigma_N(z)} + \norm{\sigma_N(z)}, \quad \forall z \in \mathcal{Z},~z' \in \mathcal{Z}_\tau.
\]
Thus, from~\eqref{eqn:thm_function1c} it follows
\[
\norm{\sigma_N(z')} \leq \omega_N \left(\norm{z - z'} \right) + \norm{\sigma_N(z)}.
\]
Substituting into~\cref{eqn:thm_function:2} we get that
\[
\norm{h(z) - \nu_{N}(z)} \leq \left(\Delta_{h_x} + \Delta_{h_\xi} + L_{\nu_N}  \right)\norm{z - z'} + \sqrt{\hat{\beta}(\tau)} \omega_N \left(\norm{z - z'} \right) +  \sqrt{\hat{\beta}(\tau)}\norm{\sigma_N(z)} ,
\]
for all $z \in \mathcal{Z}$, $z' \in \mathcal{Z}_\tau$ holds w.p. at-least $1-\delta$. Finally, since $\max_{z \in \mathcal{Z}} \min_{z' \in \mathcal{Z}_\tau} \norm{z - z'} \leq \tau$ and the minimum number of grid points $|\mathcal{Z}_\tau|$ is given by the covering number $M(\tau,\mathcal{Z})$, which implies that $\hat{\beta}(\tau) \leq \beta(\tau)$, we get
\[
\Pr \left\{
\norm{h(z) - \nu_{N}(z)} \leq \sqrt{\beta(\tau)}\norm{\sigma_N(z)} + \gamma(\tau) , \forall z \in \mathcal{Z}
\right\} \geq 1 - \delta.
\]


We follow similar lines for the derivative of the uncertainty. We have
\begin{align}
    \norm{\left( \nabla_\xi [h]_i(z) - \nabla_\xi \nu_{i,N}(z) \right)^\top } \leq & \norm{ \left( \nabla_\xi [h]_i(z) - \nabla_\xi [h]_i(z') \right)^\top } +
    \norm{ \left( \nabla_\xi \nu_{i,N}(z) - \nabla_\xi \nu_{i,N}(z') \right)^\top  } \notag \\
    &\label{eqn:function_derivative:uniform_bound:1} + \norm{\left( \nabla_\xi [h]_i(z') - \nabla_\xi \nu_{i,N}(z')    \right)^\top},
\end{align} for all $i \in \{1,\dots,m\}$, $z \in \mathcal{Z}$, $z' \in \mathcal{Z}_\tau$.
From~\cref{assmp:models:functions_derivatives}. we have that
\begin{equation}\label{eqn:function_derivative:uniform_bound:2}
    \norm{ \left( \nabla_\xi [h]_i(z) - \nabla_\xi [h]_i(z') \right)^\top } \leq \nabla_\xi \Delta_{h_\xi}^i \norm{z - z'}, \quad \forall i \in \{1,\dots,m\},~z \in \mathcal{Z},~z' \in \mathcal{Z}_\tau.
\end{equation}
Using~\cref{lem:lipschitz} we obtain
\begin{equation}\label{eqn:function_derivative:uniform_bound:3}
    \norm{ \left( \nabla_\xi \nu_{i,N}(z) - \nabla_\xi \nu_{i,N}(z') \right)^\top  } \leq \nabla_\xi L_{i,\nu_N}\norm{z - z'}, \quad i \in \{1,\dots,m\},~z \in \mathcal{Z},~z' \in \mathcal{Z}_\tau.
\end{equation}
From~\cref{lem:finite_cardinality} we have that
\begin{equation}\label{eqn:function_derivative:uniform_bound:4}
    \norm{\left( \nabla_\xi [h]_i(z') - \nabla_\xi \nu_{i,N}(z')    \right)^\top} \leq \sqrt{\hat{\beta}_\xi(\tau)} \norm{ \Sigma_\xi^{i,N}(z') }, \quad
    \forall i \in \{1,\dots,m\},~z' \in \mathcal{Z}_\tau
\end{equation} holds w.p. at least $1 - \hat{\delta}$.
Substituting~\eqref{eqn:function_derivative:uniform_bound:2}-\eqref{eqn:function_derivative:uniform_bound:4} into~\eqref{eqn:function_derivative:uniform_bound:1} produces the fact that
\begin{align}
    \norm{\left( \nabla_\xi [h]_i(z) - \nabla_\xi \nu_{i,N}(z) \right)^\top } \leq &\label{eqn:function_derivative:uniform_bound:5}
    \left( \nabla_\xi \Delta_{h_\xi}^i + \nabla_\xi L_{i,\nu_N}  \right)\norm{z - z'}  + \sqrt{\hat{\beta}_\xi(\tau)} \norm{ \Sigma_\xi^{i,N}(z') }
\end{align} holds for each $i \in \{1,\dots,m\}$, $z \in \mathcal{Z}$, $z' \in \mathcal{Z}_\tau$ w.p. at least $1 - \hat{\delta}$.
Thus, we have
\begin{align*}
   \sqrt{\hat{\beta}_\xi(\tau)} \norm{ \Sigma_\xi^{i,N}(z') } \leq \sqrt{\hat{\beta}_\xi(\tau)} \norm{ \Sigma_\xi^{i,N}(z') - \Sigma_\xi^{i,N}(z) } +  \sqrt{\hat{\beta}_\xi(\tau)} \norm{ \Sigma_\xi^{i,N}(z) },
\end{align*} holds for all $i \in \{1,\dots,m\}$, $z \in \mathcal{Z}$, $z' \in \mathcal{Z}_\tau$. Applying~\cref{lem:modulus}, we get
\begin{equation}\label{eqn:function_derivative:uniform_bound:6}
 \sqrt{\hat{\beta}_\xi(\tau)} \norm{ \Sigma_\xi^{i,N}(z') } \leq \sqrt{\hat{\beta}_\xi(\tau)} \nabla_\xi \omega_{i,N}\left( \norm{z - z'} \right) +  \sqrt{\hat{\beta}_\xi(\tau)} \norm{ \Sigma_\xi^{i,N}(z) },
\end{equation} for all $i \in \{1,\dots,m\}$, $z \in \mathcal{Z}$, $z' \in \mathcal{Z}_\tau$.
Substituting~\cref{eqn:function_derivative:uniform_bound:6} into~\cref{eqn:function_derivative:uniform_bound:5} leads to the conclusion that
\[
  \norm{\left( \nabla_\xi [h]_i(z) - \nabla_\xi \nu_{i,N}(z) \right)^\top } \leq \nabla_\xi \gamma_i \left( \norm{z - z'} \right) +  \sqrt{\hat{\beta}_\xi(\tau)} \norm{ \Sigma_\xi^{i,N}(z) }
\] holds for each $i \in \{1,\dots,m\}$, $z \in \mathcal{Z}$, $z' \in \mathcal{Z}_\tau$ w.p. at least $1 - \hat{\delta}$. Using the properties of the set $\mathcal{Z}_\tau$ in~\cref{def:uniform_bounds}, we get that
\[
  \norm{\left( \nabla_\xi [h]_i(z) - \nabla_\xi \nu_{i,N}(z) \right)^\top } \leq \nabla_\xi \gamma_i \left( \tau \right) +  \sqrt{\hat{\beta}_\xi(\tau)} \norm{ \Sigma_\xi^{i,N}(z) },
\] holds for each $i \in \{1,\dots,m\}$, $z \in \mathcal{Z}$ w.p. at least $1 - \hat{\delta}$. Using the independence of each $[h]_i(z) - \nu_{i,N}(z)$, we get that
\begin{align}
    &\Pr \left\{
      \norm{\left( \nabla_\xi [h]_i(z) - \nabla_\xi \nu_{i,N}(z) \right)^\top } \leq \nabla_\xi \gamma_i \left( \tau \right) +  \sqrt{\hat{\beta}_\xi(\tau)} \norm{ \Sigma_\xi^{i,N}(z) }, \quad \forall i \in \{1,\dots,m\},~z \in \mathcal{Z}
    \right\} \notag \\
    &\label{eqn:function_derivative:uniform_bound:7} \geq (1 - \hat{\delta})^m.
\end{align}

We can define
\[
\nabla_\xi h(z) = \begin{bmatrix}
\nabla_\xi [h]_1 (z) \\ \vdots \\ \nabla_\xi [h]_m (z)
\end{bmatrix} \in \mathbb{R}^{m \times l}, \quad
\nabla_\xi \nu_N(z) = \begin{bmatrix}
\nabla_\xi \nu_{1,N} (z) \\ \vdots \\ \nabla_\xi \nu_{m,N} (z)
\end{bmatrix} \in \mathbb{R}^{m \times l}.
\] Thus, we use~\cref{eqn:function_derivative:uniform_bound:7} to compute the Frobenius norm and obtain that
\begin{align*}
    \norm{\nabla_\xi h(z) - \nabla_\xi \nu_N(z)}_F =  \sqrt{ \sum_{i=1}^m \norm{ \nabla_\xi [h]_i(z) - \nabla_\xi \nu_{i,N}(z) }^2   } \leq & \sqrt{ \sum_{i=1}^m \left( \nabla_\xi \gamma_i(\tau) + \sqrt{\hat{\beta}_\xi(\tau)}\norm{\Sigma_\xi^{i,N}(z) }   \right)^2  } \\
    = &\nabla_\xi \Delta_{h} (z,\tau)
\end{align*} holds for all $z \in \mathcal{Z}$ w.p. at least $(1-\hat{\delta})^m$. We conclude the result by observing that $\norm{\cdot} \leq \norm{\cdot}_F$. Furthermore, as presented in Definition~\ref{def:uniform_bounds}, we have that $(1 - \hat{\delta})^m = 1- \delta$. Finally, as before,  $\hat{\beta}_\xi(\tau) \leq \beta_\xi(\tau)$ since the minimum number of grid points $|\mathcal{Z}_\tau|$ is given by the covering number $M(\tau,\mathcal{Z})$.
The proof for $\nabla_x h(z)$ is completed in a similar fashion.

\end{proof}

\end{document}